\newcommand{\Prob}{\mathbb{P}}
\newcommand{\abs}[1]{\left\lvert{#1}\right\rvert} 
\theoremstyle{definition}
\newtheorem{defn}{\protect\definitionname}
\theoremstyle{plain}
\newtheorem{prop}{\protect\propositionname}
\theoremstyle{definition}
 \newtheorem{example}{\protect\examplename}
\theoremstyle{plain}
\newtheorem{theorem}{\protect\theoremname}
\providecommand{\definitionname}{Definition}
\providecommand{\examplename}{Example}
\providecommand{\theoremname}{Theorem}
\providecommand{\lemmaname}{Lemma}
\providecommand{\propositionname}{Proposition}
\providecommand{\corollaryname}{Corollary}
\title{Identifying Wisdom (of the Crowd): A Regression Approach}
\author{Jonathan Libgober \thanks{University of Southern California. Email: \texttt{\protect\href{mailto:libgober\%40usc.edu}{libgober@usc.edu}}. This is a revision of ``Hypothetical Beliefs Identify Information.'' The main results are largely similar, although the model has been rewritten to follow the literature on opinion aggregation and crowd wisdom. I thank an associate editor for suggesting this reframing and other comments, as well as three referees for insightful suggestions. I thank Juan Carrillo,  
Xiaosheng Mu, Mallesh Pai, and Larry Samuelson for helpful comments, as well as the hospitality of the Cowles Foundation and Yale University which hosted me during part of this research.}}
\date{January 29, 2023}
\begin{document}

\maketitle

\begin{abstract} 
\noindent Experts in a population hold (a) beliefs over a state (call these \emph{state beliefs}), as well as (b) beliefs over the distribution of beliefs in the population (call these \emph{hypothetical beliefs}). If these are generated via updating a common prior using a fixed information structure, then the information structure can (generically) be derived by regressing hypothetical beliefs on state beliefs, provided there are at least as many signals as states. In addition, the prior solves an eigenvector equation derived from a matrix determined by the state beliefs and the hypothetical beliefs. Thus, the ex-ante informational environment (i.e., how signals are generated) can be determined using ex-post data (i.e., the beliefs in the population). I discuss implications of this finding, as well as what is identified when there are more states than signals. 
\end{abstract}

\newpage

\section{Introduction}

Consider a population of experts independently forming opinions over some possible events---for instance,  meteorologists predicting the weather, medical professionals seeking to diagnose a patient, or consultants attempting to determine the profitability of an investment. An outside analyst (e.g., an econometrician) knows that every time a new prediction confronts the group, the opinions of each member are formed by updating beliefs based on a fixed common prior using independently drawn signals, the latter of which depend on the truth according to some fixed information structure. 

This paper is interested in the question of how to determine the content of the experts' information in situations like the ones described above. Central to my approach is an assumption that the analyst can observe not only the beliefs of each group member regarding the relevant state, but also their beliefs regarding the distribution of their peers' beliefs. In this paper, I refer to this latter object as the ``hypothetical beliefs,'' as they are equivalently described as the probability distribution an expert would assign, given their (ex-post) belief, to each possible belief of a hypothetical expert, were this expert drawn uniformly at random from the population; insofar as experts view themselves as exchangeable, this corresponds to the beliefs regarding the hypothetical possibility of having received a different signal. Note that this will typically differ from the true distribution of signals in the population whenever the expert does not know the true state.

The analyst is interested in learning the informational environment which determines the population's beliefs. This consists of two components: First, the prior over the set of possible states (e.g., weather events, diseases, success outcomes); and second, the  Blackwell experiment---that is, the function which maps each state of the world to a distribution over the set of possible signals---which the experts use when forming their beliefs. The problem is that these states are not directly observed, and the experts themselves may not be able to describe how states map into a distribution over signals directly. Note that, if the analyst were to observe the \emph{true} distribution over beliefs in the group, then knowledge of the Blackwell experiment could allow the analyst to learn the state itself, simply by matching the observed belief distribution to the predicted one given the state. This question, on how to infer the true state of the world given the group's ex-post beliefs, is the main focus of the literature on \emph{crowd wisdom}, discussed in more depth below. The basic framework I develop is in fact identical to one from \cite{PSM2017} (see also \cite{PM2022}). This literature shares the assumption that it is prohibitive to elicit the distribution over signals conditional on the state itself. This could reflect limited memory among the group, or simply that experts only form beliefs ``as-if'' updating from a common prior, without knowledge of the prior or Blackwell experiment itself.\footnote{An example from \cite{PSM2017} is the question of whether Philadelphia is the capital of Pennsylvania. The questions (i) ``what probability do you assign to Philadelphia being the capital of Pennsylvania?'' and (ii) ``given a probability $p$, what probability would you assign to a randomly selected person believing that Philadelphia is the capital of Pennsylvania with probability $p$?'' are fairly straightforward to formulate; it may be harder to answer the question ``if Philadelphia were the capital of Pennsylvania, what probability would you assign to someone believing Philadelphia were the capital with probability $p$?''}

\cite{PSM2017} point out that the ex-post beliefs of the population may fail to reflect the truth if instead they simply reflect a prior which heavily weights one state (see also \cite{ABS2020}). If the prior is such that one state, say $\theta^{*}$, is weighted more heavily than others and the signal observed is sufficiently weak, then the experts will always simply believe $\theta^{*}$ is more likely, even when it is not the true state. The issues related to inference of the information structure are slightly different, but again stem from the observation that the prior is confounding---a given set of beliefs over the state could emerge given \emph{any} prior  in the interior of the convex hull of these beliefs. In turn, different priors may very well correspond to different Blackwell experiments.\footnote{For instance, suppose there are two possible states, $\{1,2\}$ and two possible signals, $\{1,2\}$. Suppose that following signal 1, state 1 has posterior probability 3/4, and under signal 2, state 1 has posterior probability 1/4. If each state is ex-ante equally likely, then these ex-post beliefs are consistent with the signal being equal to the state with probability 3/4. But it may be that the prior probability of state 1 is higher, with these same ex-post beliefs being induced. The same information structure would not yield the same ex-post beliefs with the different prior.} Thus, beliefs over the state are insufficient to pin down the Blackwell experiment. 

My main result is that the Blackwell experiment in the experts' problem can be determined as the outcome of a strikingly simple regression procedure I describe. The hypothetical beliefs of the population play a crucial role in this procedure. The simplest case is when there are at least as many signals as states. In this case, the procedure is to simply regress a hypothetical belief vector (i.e., the probability assigned to possibly having observed some fixed, particular signal) on the matrix of beliefs. While specifying these objects appropriately requires some care, this regression delivers the information structure which generates the signals. 

The prior the experts update from also has a geometric interpretation as an eigenvector of a matrix which comes out of analyzing the martingale condition on beliefs; in particular, the prior is the unique eigenvector corresponding to eigenvalue 1, normalized so that the entries sum to 1. Under mild assumptions, one can back out both the information structure and the prior from the possible beliefs over the states, together with the hypothetical beliefs. I also discuss what it means for these conditions to fail, and what can be done when they do. 

If there are more signals than states, then the problem above cannot be solved via the same method. Again, the issue is familiar from linear regression, where this takes the form of an identification problem that emerges if there are more explanatory variables than observations. It turns out that one proposal from statistics for how to address this problem can allow the analyst to learn something in my situation as well---I describe a regularization process which essentially allows us to perform the inversion step required by linear regression. The process is known as ``ridge regression.'' The idea is to add a small perturbation to the singular matrix to avoid the invertibility issue that arises with the identification failure. I show that this procedure identifies a subspace on which the information structure must lie, up to a vector in the null space in the matrix of beliefs. In some cases this restriction may allow the analyst to determine the information structure. But perhaps more surprisingly, I show that even if the procedure does not identify the information structure, it nevertheless does induce the same eigenvector interpretation of the prior (even with the ``incorrect'' information structure), allowing it to be recovered. 

My analysis therefore shows how intimately related hypothetical beliefs are to the underlying information structure. In the course of the analysis, I discuss the geometric relationship between the ex-ante determinants of the experts' environment (i.e., the prior and Blackwell experiment) and the ex-post data (i.e., the beliefs observed by the analyst). I briefly mention the two most notable observations along these lines: First, I consider the dimensionality of the set of hypothetical belief which can emerge, given a fixed matrix of beliefs over the state. If there are more signals than states, then this is generally smaller than the set of hypothetical belief matrices which can emerge based only on the requirement that probabilities are non-zero and sum to one, sometimes significantly so. I interpret this as a word of caution, as ``most'' possible hypothetical belief matrices will not correspond to any informational environment. Second, I consider relaxing the assumption that all subjects use a common prior, by allowing priors to depend on signal realizations themselves (even holding fixed the relevant Blackwell experiment). In this case, there are no restrictions at all---any $Q$ can be rationalized. 

The contribution of this paper is in showing how the regression approach allows for the analyst to identify the informational environment in an intuitive way. I note, while the focus of \cite{PSM2017} is on inferring the state, the procedure they identify can also be used to infer the relevant Blackwell experiment. Briefly, the core of their approach is to first determine the ex-ante probability that each signal is observed, whereas regression first determines the Blackwell experiment. I contrast these approaches and ultimately conclude that they are complementary.\footnote{In fact, there is a precise sense in which these approaches are \emph{dual} to one another. Specifically, their argument uses an eigenvector characterization of the ex-ante probability each signal is observed. By contrast, I discuss how regression lends itself to an eigenvector characterization of the prior over states. Thus, I identify a duality between ``priors over signals'' and ``priors over states'' that emerges in the course of the analysis.} While in many cases these are two paths to the same ends, there are instances where regression enables the analyst to learn something in cases where the ex-ante signal distribution could not be identified. On the other hand, there are cases where the \cite{PSM2017} approach is more direct and simpler to implement than regression. Nevertheless, clarifying and revealing the mathematical structure underlying these objects may prove useful in subsequent analysis of this and other closely related environments.  

\section{Preliminaries}  \label{sect:model}

A continuum population of mass one forms beliefs over a finite state space $\Theta$. The belief that each individual holds over $\Theta$ is summarized by a \emph{belief type}. Let $S$ denote the set of belief types which are possible in the population; I assume throughout that $\abs{S} < \infty$. Write $b_{s,\theta}$ for the probability an individual with belief type $s \in S$ assigns to the state being $\theta \in \Theta$; I call such a belief a \emph{state belief} when necessary to avoid confusion with hypothetical beliefs (introduced below). I denote the $\abs{S}$-by-$\abs{\Theta}$ matrix of beliefs (over the state) by $B=(b_{s,\theta})_{s \in S, \theta \in \Theta}$. Here, rows are belief types, and columns are states $\theta \in \Theta$ over which the beliefs are formed. I refer to the matrix $B$ as the state belief matrix. I assume that $B$ does not have a zero vector for any column, that $B$ is full rank.

Each individual is able to form not only a belief over the state, but also a conjecture of the distribution of belief types in the population. I call such beliefs \emph{hypothetical beliefs}. Formally, let $q_{s,\tilde{s}}$ denote the probability an individual $i$ with belief type $s$ assigns to an individual $j$ being of belief type $\tilde{s}$, when $j$ is drawn uniformly at random from the population. I let $Q=(q_{s, \tilde{s}})_{s \in S, \tilde{s} \in S}$ denote the corresponding $\abs{S}$-by-$\abs{S}$ matrix, where rows index belief types of individual $i$, and the columns index the belief types the randomly drawn individual $j$ observes in this exercise (so that $Q$ is row-stochastic). Equivalently, the row corresponding to $s$ gives the expected distribution over belief types in the population held by an individual of belief type $s$. I refer to the matrix $Q$ as the  hypothetical belief matrix. 

I refer to the combination of $B$ and $Q$ as the \emph{belief landscape}. In this paper, I take a belief landscape as a primitive. I will call a belief landscape \emph{plausible} if $B$ and $Q$ are row stochastic and non-negative. If either of these conditions are violated, then it is immediately apparent that it is not possible to interpret $B$ and $Q$ as I have so far, namely reflecting Bayesian probabilities.

In principle, a belief landscape could be arbitrary, but I am interested in cases where it is generated by the population consisting of Bayesian decision-makers who all receive conditionally IID signals from a fixed Blackwell experiment. A \emph{Blackwell experiment} or \emph{information structure} is a function $\mathcal{I} : \Theta \rightarrow  \Delta(S)$. Let $\mathcal{I}(\theta)[s]$ denote the probability that the decision-maker observes $s$ in state $\theta$. Given a prior $p \in \Delta(\Theta)$, a decision-maker who observes a signal $s$ from a Blackwell experiment can update beliefs over each state $\theta \in \Theta$ via Bayes rule; that is, if the decision-maker's belief type $s$ is determined by the information structure $\mathcal{I}$, then: 

\begin{equation} 
b_{s, \theta} = \frac{p(\theta) \mathcal{I}(\theta)[s]}{\sum_{\tilde{\theta} \in \Theta} p(\tilde{\theta}) \mathcal{I}(\tilde{\theta})[s]}. \label{eq:bayes}
\end{equation}

\noindent In addition, $(q_{s,s_{1}}, \ldots, q_{s,s_{n}})$ is pinned down as well. If all individuals have access to the same Blackwell experiment, then $\mathcal{I}(\theta)[\tilde{s}]$ is the fraction of the population that obtains signal $\tilde{s}$ in state $\theta$ (and thus the probability that a randomly drawn individual has belief type $\tilde{s}$). Thus, if an individual of belief type $s$ holds beliefs $(b_{s,\theta})_{\theta \in \Theta}$, then the law of total probability implies: 

\begin{equation} 
q_{s,\tilde{s}}= \sum_{\theta} \mathcal{I}(\theta)[\tilde{s}] b_{s,\theta} \label{eq:totalprob}
\end{equation}

\noindent The main question of this paper is whether one can take the population's beliefs to be generated by all individuals having access to the same $\mathcal{I}$ and updating from the same prior $p$. 

\begin{defn}  \label{def:informationenviron}
A belief landscape $(B,Q)$ is generated by an informational environment $(\mathcal{I}, p)$ if $b_{s,\theta}$ and $q_{s,\tilde{s}}$ can be derived using prior $p$ and taking the belief type $s$ to be drawn according to $\mathcal{I}$ using (\ref{eq:bayes}) and (\ref{eq:totalprob}).
\end{defn}

\noindent I am interested in identifying the informational environment; however, as stated in the introduction, were the analyst to observe the fraction of the population comprising each belief type, then the state itself could be inferred as well by comparing this fraction of population with belief type $s$ to $\mathcal{I}(\theta)[s]$. This question is often the primary focus of the crowd wisdom literature (reviewed more completely below), with the identification of information being secondary, but I return to this observation when I compare my approach to others from this literature.

\begin{example} \label{ex:mainillustration}
I walk through a simple example to illustrate the key definitions from the previous section. Suppose $\Theta= \{\theta_{1}, \theta_{2}, \theta_{3}\}$, and suppose there is an initial prior over $\Theta$ that assigns probability $p(\theta_{i})$ to state $\theta_{i}$. Consider the following information structure: With probability $\varepsilon \in (0,1)$, a ``null signal'' is drawn. With probability $1-\varepsilon$, the state is observed. Using the above formalism, one can write the state belief matrix as follows:

\begin{equation*} 
B= \begin{pmatrix} p(\theta_{1}) & p(\theta_{2}) & p(\theta_{3}) \\ 1 & 0 & 0 \\ 0  & 1 & 0 \\ 0& 0 & 1 \end{pmatrix}.
\end{equation*} 

\noindent Notice that, consistent with the definitions above, the rows refer to different signals an individual might observe, and the columns refer to different states. 

The hypothetical belief matrix corresponding to this information structure is therefore: 
\begin{equation*} 
Q= \begin{pmatrix} 
\varepsilon & (1- \varepsilon) p(\theta_{1}) & (1- \varepsilon) p(\theta_{2}) & (1- \varepsilon) p(\theta_{3}) \\ \varepsilon & 1 - \varepsilon & 0 & 0 \\ \varepsilon &0 & 1- \varepsilon & 0 \\ \varepsilon & 0 & 0 & 1- \varepsilon
\end{pmatrix}.
\end{equation*}

\noindent To understand why the hypothetical belief matrix takes this form, note that if someone were to observe no information, then she would still understand this event to be an $\varepsilon$ probability occurrence; this corresponds to the entry in the upper left corner. On the other hand, since she obtains no information following the uninformative signal, the probability she would then assign to observing each of the other three signals, \emph{conditional on observing an informative signal}, simply coincides with the prior distribution. Since the probability of observing such a signal in the first place is $1- \varepsilon$, she therefore assigns $(1- \varepsilon)p(\theta_{i})$ to the event that she would have seen the signal revealing state $\theta_{i}$.  Following every other signal, while she \emph{would} know the state, she would also understand that there was an $\varepsilon$ chance that she would have remained uninformed. Thus, on the one hand, she assigns probability $\varepsilon$ to observing the uninformative signal, but also assigns probability 0 to observing any signal that would reveal any other state. Importantly, this matrix is row stochastic, each row itself being a probability distribution over $S$.
\end{example}

\subsection{Discussion} 
The framework above assumes a continuum population. This is for expositional simplicity. In general, as long as the $\abs{S}$ signals are observed, one can immediately write the matrix $B$. However, the matrix $Q$ is slightly harder to conceptualize; with a finite population, it represents the probability distribution over signals that the individual would expect given a conditionally independent draw from the same experiment. Equivalently, it represents the proportion of each belief type a decisionmaker would expect to observe given an infinite population. 

More substantive is that the matrix $B$ and $Q$ are observed without noise. If the analyst had access to an infinite sample, this could indeed be reasonable, but in practice one would expect $B$ and $Q$ to be observed only imprecisely. On the one hand, in the method outlined below, a small perturbation in $B$ and $Q$ will correspondingly lead to only a small change in the solved $\mathcal{I}$; thus, a slight error in $B$ or $Q$ will translate to a slight error in $\mathcal{I}$ (although how slight will itself depend on the accuracy $B$ and $Q$). 

Note that I also assume that the matrix $B$ does not have any column equal to 0. This assumption is not substantive. If posterior beliefs always put probability 0 on some state, then given updating from a common prior and a common Blackwell experiment, the prior probability on this state must also be 0. As my main case of interest is in when this is how beliefs are generated, the case where this assumption is violated is not relevant for my main results. The full rank assumption is more restrictive, and I comment on it in more detail in Appendix \ref{app:lineardependence}. There, I argue that the assumption that $B$ is full rank is generic, in a sense I describe, and show how the regression procedure I propose can be extended in cases where columns of $B$ feature linearly dependencies.

\section{Identifying Information if $\abs{S} \geq \abs{\Theta}$} \label{sect:mainresults}

The previous section showed how to construct both kinds of belief matrices from an information structure given a prior belief; the state belief matrix $B$ is computed from Bayes rule, whereas the hypothetical belief matrix additionally can be derived using rules of conditional and total probabilities. The goal of this paper is to go in the opposite direction. A well-known result (originally due to \cite{AM1995}) states that, given a prior belief and a set of posteriors, there is an essentially unique information structure inducing these beliefs. The main question I study in this paper is how to infer the decision-maker's information structure and prior using data from the (state and hypothetical) belief matrices. 

The following result provides an answer to this question when $\abs{S} \geq \abs{\Theta}$: 

\begin{theorem} \label{thm:identification} 
Suppose $B$ and $Q$ are generated by an informational environment $(\mathcal{I},p)$ satisfying the conditions in Section \ref{sect:model}, with $\abs{\Theta} \leq \abs{S}$. Then $\mathcal{I}$ is uniquely identified by $B$ and $Q$. If, for this $\mathcal{I}$, $\mathcal{I}B$ is irreducible, then $p$ is uniquely identified by $B$ and $Q$. Furthermore, the irreducibility condition holds for generic Blackwell experiments.\footnote{Here and in the proof, I take generic to mean that the set of Blackwell experiments which fail this condition fall within a lower dimensional subspace.}
\end{theorem}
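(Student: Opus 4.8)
The plan is to break the theorem into its three claims and handle them in sequence, exploiting the two defining equations (\ref{eq:bayes}) and (\ref{eq:totalprob}).

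First, for the identification of $\mathcal{I}$, I would read equation (\ref{eq:totalprob}) as a matrix equation. Fixing a target signal $\tilde{s}$, the column vector $(q_{s,\tilde{s}})_{s \in S}$ equals $B$ applied to the column vector $(\mathcal{I}(\theta)[\tilde{s}])_{\theta \in \Theta}$; that is, $Q = B \mathcal{I}^{\top}$ where $\mathcal{I}^{\top}$ denotes the $\abs{\Theta}$-by-$\abs{S}$ matrix with entries $\mathcal{I}(\theta)[\tilde{s}]$. Since $B$ is full rank and $\abs{\Theta} \leq \abs{S}$, the Gram matrix $B^{\top}B$ is invertible, so I can left-multiply by $(B^{\top}B)^{-1}B^{\top}$ to recover $\mathcal{I}^{\top} = (B^{\top}B)^{-1}B^{\top}Q$. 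This is precisely the regression step advertised in the introduction, and it shows $\mathcal{I}$ is the unique solution, establishing the first claim. The only thing to check is that full rank of $B$ together with $\abs{\Theta} \leq \abs{S}$ genuinely guarantees $B^{\top}B$ is invertible, which is standard since full column rank of $B$ is equivalent to invertibility of $B^{\top}B$.

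Second, for the prior $p$, I would derive the eigenvector equation promised in the introduction. The natural route is the martingale/law-of-iterated-expectations property of Bayesian beliefs: the prior must equal the $Q$-weighted average of the posteriors, or more precisely $p$ satisfies a consistency condition linking $B$, $\mathcal{I}$, and $p$. Concretely I expect the relation $p^{\top} = p^{\top}(\mathcal{I}B)$, so that $p$ is a left eigenvector of the row-stochastic (or column-stochastic, depending on orientation) matrix $\mathcal{I}B$ with eigenvalue $1$. The substantive content is uniqueness: by the Perron--Frobenius theorem, an irreducible nonnegative matrix whose relevant stochasticity forces spectral radius $1$ has a one-dimensional eigenspace for eigenvalue $1$, pinning down $p$ uniquely after normalizing entries to sum to one. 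I would first verify that $\mathcal{I}B$ is a genuine stochastic matrix (its entries are nonnegative and the appropriate sums equal one, which follows from $B$ and $\mathcal{I}$ being row-stochastic), then invoke Perron--Frobenius under the stated irreducibility hypothesis.

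Third, and this is where I expect the main obstacle, I must show irreducibility holds generically. Irreducibility of the nonnegative matrix $\mathcal{I}B$ is equivalent to its associated directed graph being strongly connected, i.e., no proper nonempty subset of states is closed under the transition structure. The plan is to argue that the failure set is contained in a lower-dimensional algebraic subset of the space of Blackwell experiments. Reducibility forces a block-triangular structure after permutation, which imposes that certain entries of $\mathcal{I}B$ vanish identically; each such vanishing is a nontrivial polynomial condition on the entries of $\mathcal{I}$ (since $B$ is fixed and full rank, these entries are nondegenerate linear-in-$\mathcal{I}$ expressions, so demanding a block of them be zero is a proper algebraic constraint). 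Taking the union over the finitely many partitions of $\Theta$ into candidate closed classes keeps the bad set within a finite union of proper subvarieties, hence lower-dimensional, which is exactly the notion of generic invoked in the footnote. The delicate point will be confirming that for each candidate reducing partition the required entries of $\mathcal{I}B$ are not forced to vanish for \emph{all} $\mathcal{I}$ (which would make the condition vacuous rather than generic); here I would use that $B$ has no zero column and is full rank to exhibit, for each partition, at least one $\mathcal{I}$ making the relevant off-block entry nonzero, certifying that the vanishing locus is a proper subvariety.
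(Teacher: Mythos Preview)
Your proposal is correct and matches the paper's approach closely for the first two claims. The regression identity $Q=B\mathcal{I}$ (up to transposition conventions) followed by left-multiplication by $(B^{\top}B)^{-1}B^{\top}$ is exactly the paper's argument for recovering $\mathcal{I}$, and your eigenvector characterization $p^{\top}=p^{\top}(\mathcal{I}B)$ is the transpose of the paper's $B^{\top}\mathcal{I}^{\top}p=p$. Your check that $\mathcal{I}B$ is row-stochastic---as a product of row-stochastic matrices---is in fact more direct than the paper's computation, which passes through the substitution $\mathcal{I}=(B^{\top}B)^{-1}B^{\top}Q$ and the regression interpretation of $(B^{\top}B)^{-1}B^{\top}\mathbf{1}$.

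The only substantive divergence is in the genericity claim. You set up a careful algebraic-variety argument: reducibility forces a block-triangular structure, hence a system of polynomial vanishing conditions on the entries of $\mathcal{I}$, and the bad set is a finite union of proper subvarieties. This is correct and fully rigorous, but the paper dispenses with the issue in one line: for generic $\mathcal{I}$ every entry of $\mathcal{I}B$ is strictly positive (each entry is a nonnegative bilinear expression that is not identically zero, since $B$ has no zero column), and a strictly positive matrix is trivially irreducible. Your route would be needed if one cared about irreducibility over a larger set than ``entrywise positive,'' but for the theorem as stated the paper's shortcut suffices and avoids the partition bookkeeping.
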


The Theorem shows how to construct the information structure from the state belief matrix and the hypothetical belief matrix. Specifically, it shows that the information structure arises from \emph{regressing a given column of the hypothetical belief matrix on the columns of the state belief matrix.} I illustrate this using the truth-or-noise information structure from the previous section. If $p(\theta_{1})=p(\theta_{2})=p(\theta_{3})=1/3$, then I compute: 

\begin{equation*}
(B^{T}B)^{-1} B^{T} = \begin{pmatrix} \frac{1}{4} & \frac{11}{12} & - \frac{1}{12} & - \frac{1}{12} \\  \frac{1}{4} & - \frac{1}{12} &  \frac{11}{12} & - \frac{1}{12} \\ \frac{1}{4} & - \frac{1}{12} & - \frac{1}{12} &  \frac{11}{12}  \end{pmatrix}
\end{equation*}

\noindent For an arbitrary vector $v$, the expression $(B^{T}B)^{-1}B^{T}v$ is well-known---this is simplly the ordinary least squares regression of $v$ on $B$. That is, it  determines the coefficients $\beta$ in the equation $v=B \cdot \beta$ which provide the ``best-fit'' (according to the least square error). Also observe that in this example, the rows of $(B^{T}B)^{-1}B^{T}$ sum to 1 (which turns out to be a general property). Letting $\mathbf{1}_{n}$ denote a vector of 1s of length $n$, since the first column of $Q$ is $\varepsilon \cdot \mathbf{1}_{4}^{T}$, the coefficients corresponding to this regression are $\varepsilon \cdot \mathbf{1}_{3}^{T}$. This is exactly the vector of probabilities of observing the null signal in each of the three states. On the other hand, if one were to instead consider the second column of $Q$:

\begin{equation*} 
(B^{T}B)^{-1}B^{T} \begin{pmatrix} \frac{1- \varepsilon}{3}   \\  1- \varepsilon \\ 0  \\ 0 \end{pmatrix} = \begin{pmatrix} 1- \varepsilon \\ 0 \\  0  \end{pmatrix},  
\end{equation*}

\noindent which is precisely the vector of probabilities that the decision-maker observes the signal saying the state is $\theta_{1}$ (that is, the probability that this signal is seen in each of the three states). 

Notice that Theorem \ref{thm:identification}, and in particular the regression interpretation, also clarifies \emph{exactly} what is decided by each column of $Q$. Each column of $Q$ gives a unique column of the matrix determining the information structure $\mathcal{I}$---in particular, the column it yields is the vector of probabilities (with different coordinates corresponding to different states) that that signal emerges. This property will be important when I compare my approach to an alternative one in Section \ref{sect:SP}. As Section \ref{sect:generic} notes, however, not \emph{every} (plausible) $Q$ will yield an information structure consistent with $B$.

Having described how to obtain the information structure, I now consider the prior. An information structure can be identified by the prior belief and the set of posteriors induced by it (which in this case is $B$). However, this requires the prior to be in the interior of the convex hull of the posterior beliefs. While the regression procedure finds the information structure, it does not guarantee that the resulting prior satisfies the interiority condition. 

Addressing this issue yields another insight into the geometry of information structures. Note that the probability of observing any particular signal determined from $\mathcal{I}$ and $p$: 

\begin{equation*} 
\sum_{\theta} \mathcal{I}(\theta)[s] \cdot p(\theta) = \Prob[s]. 
\end{equation*} 

\noindent The martingale property of beliefs states that: 

\begin{equation*} 
\sum_{s} b_{\theta,s} \Prob[s] = p(\theta).
\end{equation*} 

\noindent Thus, substituting in for $\Prob[s]$ and rewriting in matrix form gives the following identity: 

\begin{equation*} 
B^{T}\mathcal{I}^{T}p=p \Rightarrow (B^{T} \mathcal{I}^{T} - I)p = p.
\end{equation*}

\noindent This equation demonstrates that the prior is therefore a \emph{unit eigenvector with eigenvalue 1 of the matrix} $B^{T}\mathcal{I}^{T}$ (or, by taking transposes, a left eigenvector of $\mathcal{I}B$).\footnote{By unit eigenvector, in this paper I mean the entries sum to 1; such an eigenvector has a norm equal to 1 when $\abs{\abs{x}}=\sum_{i} \abs{x_{i}}$.}  And in fact, given the previous observation that the information structure $\mathcal{I}$ can be identified from $B$ and $Q$, this shows that the prior can as well. The proof verifies that indeed this eigenvalue can be guaranteed to exist, via an appeal to the Frobenius-Perron theorem. It is worth emphasizing that the Frobenius-Perron theorem ensures that the eigenvector which satisfies this equation is \emph{unique} (up to scaling), implying that the prior is pinned down as well.

This derivation introduces the object $B^{T} \mathcal{I}^{T}$, whose $i$th row and $j$th column is: 

\begin{equation*} 
\sum_{\tilde{s}}  \mathcal{I}(\theta_{j})[\tilde{s}]b_{\theta_{i},\tilde{s}}. 
\end{equation*} 

\noindent The represents the expectation, in state $\theta_{j}$, of the probability assigned to state $\theta_{i}$ by a population member selected uniformly at random. Roughly speaking, with a more informative experiment, this should be larger for $\theta_{i}= \theta_{j}$ and smaller for $\theta_{i} \neq \theta_{j}$. By comparing the diagonal of the matrix $\mathcal{B}^{T} \mathcal{I}^{T}$ to the other entries, the analyst can get a rough sense of the accuracy of the beliefs in the population. 

A dual operation would be to instead substitute in for $p(\theta)$ instead of $\mathbb{P}[s]$; this operation is dual in the sense that the matrix obtained in the analogous eigenvector equation is $B \cdot \mathcal{I}$ in place of $B^{T} \cdot \mathcal{I}^{T}$. Inspecting this equation, however, note that it is precisely the hypothetical belief matrix $Q$. This result, that $\mathbb{P}[s]$ is the unit eigenvector of $Q$ with eigenvalue 1, appeared in \cite{PSM2013}, and further exploited by \cite{PM2022}; notice that the derivation of $\mathbb{P}[s]$ is thus more direct than the derivation of $p(\theta)$, as $Q$ is a primitive of the environment. I return to a discussion of this contrast in Section \ref{sect:SP}.

The condition needed for uniqueness of the prior is that the matrix of which the prior is a Perron eigenvector is irreducible. See the discussion in Appendix \ref{app:irreducibility} for a discussion of what is identified when irreducibility fails. Briefly, if $B^{T}\mathcal{I}^{T}$ is not irreducible, then one can find a Perron eigenvector for each irreducible class of the stochastic process which $B^{T}\mathcal{I}^{T}$ defines. In this case, any convex combination between these eigenvectors will be a prior inducing $B$ and $Q$.\footnote{Note that if $B^{T} \mathcal{I}^{T}$ is not irreducible, then $Q$ will not be either; this follows from the observation that the signal space can be partitioned using the states that can be distinguished. Similarly, the only way a decisionmaker can know a certain signal would never be observed in the population is if they also know that any state where that signal is drawn is not the true state; so, if $Q$ is not irreducible, then neither is $B^{T}\mathcal{I}^{T}$.} As implied by the above discussion, irreducibility will fail if, for instance, an agent can always distinguish between two subsets of the state space.\footnote{To clarify this most transparently, Appendix \ref{app:irreducibility} focuses on the case of deterministic information structures, where signals are generated deterministically as a function of the state. In this case, no information is conveyed within each irreducible class---i.e., within each partition element---even though it is evident that agents can distinguish between elements of the partition.} In this case, the prior can only be uniquely defined within each irreducible class, but not overall.

\subsection{Are Generic $B$ and $Q$ Generated by Informational Environments?} \label{sect:generic}

A natural question that emerges from the previous discussion is how restricted the set of hypothetical belief matrices are. Can an arbitrary plausible hypothetical belief matrix be consistent with a state belief matrix? The answer to this question turns out to be no. For instance, suppose the belief matrix induced is the following: 

\begin{equation*} 
B= \begin{pmatrix} 1/4 & 3/4 \\ 3/4 & 1/4 \end{pmatrix} \Rightarrow (B^{T}B)^{-1} B^{T}= \begin{pmatrix} - \frac{1}{2} & \frac{3}{2} \\ \frac{3}{2} & - \frac{1}{2} \end{pmatrix}.
\end{equation*}

\noindent Now, given some candidate $Q= \begin{pmatrix} a & 1-a \\ 1-b & b \end{pmatrix}$, right multiplying by $Q$ gives: 

\begin{equation*} 
\begin{pmatrix} - \frac{a}{2}+ \frac{3(1-b)}{2} &  - \frac{1-a}{2}+ \frac{3b}{2} \\  \frac{3a}{2}-\frac{1-b}{2} &   \frac{3(1-a)}{2}- \frac{b}{2} \end{pmatrix}.
\end{equation*}

\noindent Now, $Q$ is plausible whenever $a, b \in [1/4,3/4]$; yet a corollary of Proposition \ref{prop:dimproof} is that, given a value for $a$, only a single value for $b$ will be consistent with a belief landscape. Thus, while any such value for $a$ and $b$ would yield an information structure, it does not necessarily yield one that can induce $B$ given any prior $p$. The prior, together with $B$, pins down the information structure; from this, the matrix $Q$ can always be inferred. These observations yield the following Proposition:

\begin{prop}  \label{prop:dimproof} 
Consider a belief matrix $B$ with linearly independent columns. The set of all possible $Q$ such that $B$ and $Q$ are generated by an informational environment is $\abs{\Theta}-1$ dimensional.  
\end{prop}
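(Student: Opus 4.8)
The plan is to parameterize the informational environments that generate the \emph{fixed} matrix $B$ by the prior alone, and then to track how the induced $Q$ moves as the prior varies. Writing $\pi_{s}=\Prob[s]$ for the ex-ante probability of belief type $s$, equations (\ref{eq:bayes}) and (\ref{eq:totalprob}) say that any generating environment satisfies $Q=B\mathcal{I}$ together with the martingale identity $p=B^{T}\pi$ and the Bayes relation $\mathcal{I}(\theta)[s]=b_{s,\theta}\pi_{s}/p(\theta)$. The first thing I would record is therefore that $\mathcal{I}$ is completely determined by the pair $(\pi,p)$, and that $p$ is in turn determined by $\pi$. Conversely, I would check that any $\pi$ in the interior of $\Delta(S)$ with $p:=B^{T}\pi$ strictly positive yields a bona fide environment generating $B$: the row sums of $\mathcal{I}$ equal one precisely because $B^{T}\pi=p$, and the induced posteriors are exactly the rows of $B$ because each row of $B$ sums to one. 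This gives a clean description of the feasible set before any dimension counting.

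The crux --- and the step I expect to be the main obstacle --- is to show that the prior $p$, not the full vector $\pi$, is the right coordinate: that is, that $p$ pins down $\pi$ and hence $\mathcal{I}$. I would read the martingale identity as $p=\sum_{s}\pi_{s}\,b_{s}$, so that $p$ is the affine combination of the rows $b_{s}$ of $B$ with weights $\pi\in\Delta(S)$; these barycentric weights are unique precisely when the rows $b_{s}$ are affinely independent, which is the geometric content that the linear-independence hypothesis on the columns of $B$ must supply here. This is exactly the ``prior together with $B$ pins down $\mathcal{I}$'' assertion, and it is also the essentially-unique-information-structure result of \cite{AM1995} specialized to our matrices. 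Granting it, the environments generating $B$ are in bijection with the admissible priors, namely the relative interior of the convex hull of the rows of $B$, which is an open subset of the hyperplane $\{x:\sum_{\theta}x_{\theta}=1\}$ and hence is $\abs{\Theta}-1$ dimensional.

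It then remains to transport this count through the assignment $p\mapsto Q$. I would define $\Psi(p)=B\,\mathcal{I}(p)$ and argue it is injective (indeed real-analytic) on the admissible priors, by composing the injectivity of $p\mapsto\pi$ (affine independence again), of $\pi\mapsto\mathcal{I}$, and of $\mathcal{I}\mapsto B\mathcal{I}$; the last holds because $B$ has full column rank, which is precisely what makes the regression $(B^{T}B)^{-1}B^{T}$ of Theorem \ref{thm:identification} well defined and able to recover $\mathcal{I}$ from $Q$. An injective analytic map on an open $(\abs{\Theta}-1)$-dimensional set has image of the same dimension, so the set of admissible $Q$ is $\abs{\Theta}-1$ dimensional. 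The $2\times2$ display preceding the Proposition is the special case $\abs{\Theta}=2$: the image is a one-parameter curve, so fixing the entry $a$ leaves a single compatible $b$. The only points requiring care are the positivity/genericity needed to keep $p$ interior (so the Bayes relation is well posed), and the verification that $\Psi$ is an immersion rather than merely injective, for which I would compute its Jacobian or invoke invariance of dimension for injective continuous maps.
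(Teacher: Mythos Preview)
Your approach is essentially the paper's: parameterize the environments generating $B$ by the prior $p$, invoke uniqueness of $\mathcal{I}$ given $(B,p)$ (the paper cites \cite{KG2011}/\cite{AM1995}; you phrase it as affine independence of the rows of $B$), and then use the regression of Theorem~\ref{thm:identification} to show $Q=B\mathcal{I}$ determines $\mathcal{I}$, so the admissible $Q$'s inherit the $(\abs{\Theta}-1)$-dimensionality of the priors. Your write-up adds welcome detail (the explicit role of $\pi$, the immersion/invariance-of-dimension check), but the skeleton is identical.

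The step you correctly flag as the crux, however, does not follow from the stated hypothesis. Linear independence of the \emph{columns} of $B$ does not force affine independence of the \emph{rows} once $\abs{S}>\abs{\Theta}$: the rows are $\abs{S}$ points on the $(\abs{\Theta}-1)$-simplex and are therefore necessarily affinely dependent. For each interior $p$ the equation $p=B^{T}\pi$ then has a $(\abs{S}-\abs{\Theta})$-dimensional affine family of solutions $\pi\in\Delta(S)$, each yielding a distinct $\mathcal{I}(\theta)[s]=b_{s,\theta}\pi_{s}/p(\theta)$ and hence (since $B$ has full column rank) a distinct $Q$. Concretely, take Example~\ref{ex:mainillustration} with uniform first row: holding the prior fixed at $(1/3,1/3,1/3)$, the noise parameter $\varepsilon$ remains free, and each $\varepsilon$ produces a different $Q$; the admissible $Q$'s form a set of dimension $\abs{S}-1=3$, not $\abs{\Theta}-1=2$. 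The paper's own proof has the same gap---the \cite{AM1995}/\cite{KG2011} uniqueness applies to a \emph{distribution} over posteriors (support together with weights $\pi$), not to the support $B$ alone---so both arguments are complete only in the square case $\abs{S}=\abs{\Theta}$, where $B$ is invertible and its rows are genuinely affinely independent.
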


Thus for binary state informational environments, the set of $Q$ inducing $B$ is one dimensional, verifying the claim that given a possibly valid $a$, there is a unique value of $b$ which corresponds to a valid information structure. Figure \ref{fig:infoex} shows how, given the belief matrix $B$, which $a$ and $b$ choices correspond to a fixed feasible prior $p$. For instance, $a=b=5/8$ is the solution when $p=1/2$. The choice of $a=b=9/16$ is therefore invalid; nevertheless, for these choices: 

\begin{equation*} 
(B^{T}B)^{-1}B^{T}Q=\begin{pmatrix} 3/8 & 5/8 \\ 5/8 & 3/8 \end{pmatrix}
\end{equation*}

\noindent Upon inspection, one can see that this \emph{is} indeed a perfectly valid information structure $\mathcal{I}$, and in fact one that is symmetric. But, it is also straightforward to see that it cannot induce the belief matrix $B$ for any prior; indeed, since $B$ is symmetric as well, symmetry would require $p=1/2$, which in turn would suggest distinct beliefs given the information structure than those given by $B$. Thus, while the choices of $a=b=9/16$ induce a $Q$ matrix which is row stochastic and non-negative, and is such that $(B^{T}B)^{-1}B^{T}Q$ is an information structure, it cannot be generated by any informational environment.

\begin{figure}
\centering
        \includegraphics[height=0.23\textheight]{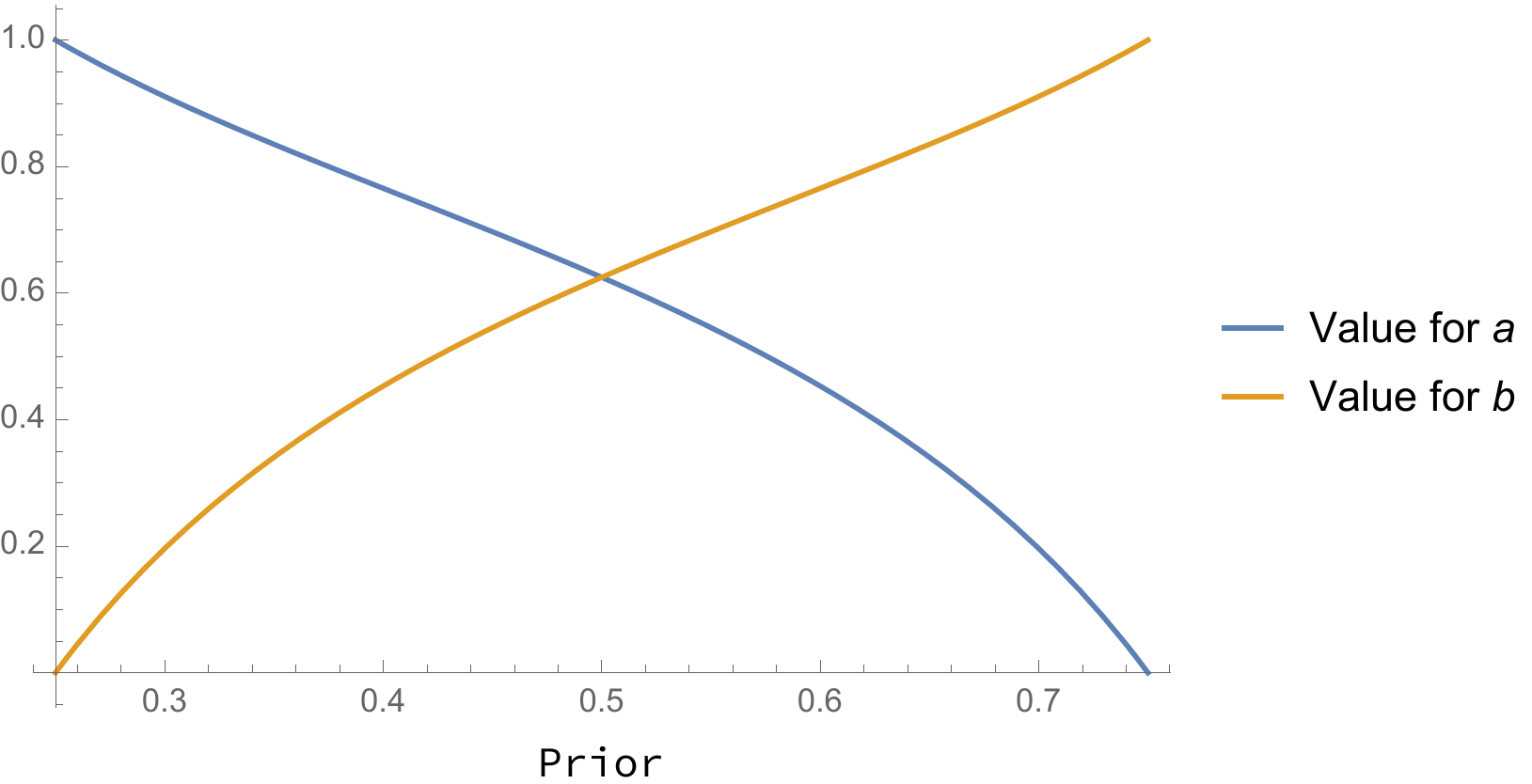}
         \caption{Value for $a$ and $b$ in the example which yield a valid hypothetical belief matrix, given a prior probability of state 1 equal to $p \in[1/4,3/4]$.}
        \label{fig:infoex}
\end{figure}

One observation that is helpful for interpreting this statement is the following: 

\begin{prop} \label{prop:dimQ}
Given a row stochastic $B$, the space of plausible $Q$ (as defined in Section \ref{sect:model}) such that the regression procedure in Theorem \ref{thm:identification} induces an information structure has dimension $\abs{S}(\abs{S}-1)$. 
\end{prop}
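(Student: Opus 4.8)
The plan is to pin down exactly which constraints the requirement ``$(B^{T}B)^{-1}B^{T}Q$ is an information structure'' places on $Q$, and then to show those constraints carve out a \emph{full}-dimensional subset of the set of plausible $Q$. Write $M=(B^{T}B)^{-1}B^{T}$ for the fixed $\abs{\Theta}$-by-$\abs{S}$ regression matrix, which is well defined because $B$ has linearly independent columns. The regression returns $\mathcal{I}=MQ$, and this is an information structure precisely when (i) each row of $MQ$ sums to one and (ii) $MQ\geq 0$ entrywise. The key reduction is that (i) is free, so only the inequality (ii) matters.

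First I would show that (i) holds automatically for every row-stochastic $Q$. Since $B$ is row stochastic, $B\mathbf{1}_{\abs{\Theta}}=\mathbf{1}_{\abs{S}}$, whence $B^{T}\mathbf{1}_{\abs{S}}=B^{T}B\,\mathbf{1}_{\abs{\Theta}}$ and therefore $M\mathbf{1}_{\abs{S}}=\mathbf{1}_{\abs{\Theta}}$; this is the ``rows of $(B^{T}B)^{-1}B^{T}$ sum to $1$'' observation noted after Theorem \ref{thm:identification}. As $Q$ is row stochastic, $Q\mathbf{1}_{\abs{S}}=\mathbf{1}_{\abs{S}}$, so $MQ\,\mathbf{1}_{\abs{S}}=M\mathbf{1}_{\abs{S}}=\mathbf{1}_{\abs{\Theta}}$, i.e.\ the rows of $MQ$ sum to one regardless of the choice of (row-stochastic) $Q$. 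Hence the only binding requirement is the entrywise inequality (ii).

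Consequently the set in question is $\mathcal{F}=\{Q:\ Q\text{ row stochastic and non-negative},\ MQ\geq 0\}$, a polyhedron contained in the affine space $\mathcal{A}=\{Q:\ Q\mathbf{1}_{\abs{S}}=\mathbf{1}_{\abs{S}}\}$ of dimension $\abs{S}^{2}-\abs{S}=\abs{S}(\abs{S}-1)$. This yields the upper bound on $\dim\mathcal{F}$ at once, so it remains to establish full-dimensionality. For that I would exhibit a strictly feasible point: take $Q^{\ast}=\tfrac{1}{\abs{S}}\mathbf{1}_{\abs{S}}\mathbf{1}_{\abs{S}}^{T}$, the matrix all of whose entries equal $1/\abs{S}$. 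It is row stochastic and strictly positive (hence in the relative interior of the plausible set), and using $M\mathbf{1}_{\abs{S}}=\mathbf{1}_{\abs{\Theta}}$ one computes $MQ^{\ast}=\tfrac{1}{\abs{S}}\mathbf{1}_{\abs{\Theta}}\mathbf{1}_{\abs{S}}^{T}$, whose entries are all $1/\abs{S}>0$. Thus at $Q^{\ast}$ both the plausibility inequalities and the inequalities $MQ\geq 0$ hold \emph{strictly}.

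The finish is a routine continuity/open-set argument: the maps $Q\mapsto Q$ and $Q\mapsto MQ$ are linear, so the strict inequalities persist on a relatively open neighborhood of $Q^{\ast}$ within $\mathcal{A}$; such a neighborhood is $\abs{S}(\abs{S}-1)$-dimensional and lies inside $\mathcal{F}$, giving the matching lower bound and hence $\dim\mathcal{F}=\abs{S}(\abs{S}-1)$. The one step genuinely worth spelling out is the automaticity of (i): once one sees that row-stochasticity of the regression output comes for free, the problem collapses from equality constraints to a single open inequality condition, which is exactly what makes $\mathcal{F}$ full-dimensional. The contrast with Proposition \ref{prop:dimproof} is that there $Q$ must in addition be reproduced as $B\mathcal{I}$ for the regressed $\mathcal{I}$, forcing each column of $Q$ into the $\abs{\Theta}$-dimensional column space of $B$; those equality restrictions are what collapse that set to dimension $\abs{\Theta}-1$, whereas here the lone surviving requirement is the open inequality $MQ\geq 0$.
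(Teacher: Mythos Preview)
Your proof is correct and follows essentially the same approach as the paper: both establish that row-stochasticity of $(B^{T}B)^{-1}B^{T}Q$ is automatic whenever $Q$ is row stochastic, then exhibit a strictly feasible point and invoke continuity to obtain a relatively open (hence full-dimensional) neighborhood inside the set of plausible $Q$. The only difference is cosmetic: the paper constructs its interior point as $Q=B\tilde{\mathcal{I}}$ for an arbitrary strictly positive $\tilde{\mathcal{I}}$, whereas you pick the concrete uniform matrix $Q^{\ast}=\tfrac{1}{\abs{S}}\mathbf{1}\mathbf{1}^{T}$, which makes the verification of strict feasibility slightly more direct.
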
 

\noindent The proof shows that if $Q$ is a matrix rows sum to $1$, then $(B^{T}B)^{-1}B^{T}Q$ will be a matrix whose rows sum to 1. The result then follows by noting that one can find an open set of plausible $Q$ inducing some valid information structure within the set of row stochastic $Q$ under the relative topology.\footnote{That is, viewing a row stochastic matrix of dimension $n$ as a subset of $\mathbb{R}^{n^{2}}$ and considering the relative topology on the set of row-stochastic matrices.} Note that the conditions of the theorem restrict consideration to $Q$ such that $(B^{T}B)^{-1}B^{T}Q$ has exclusively non-negative entries; this need not hold for arbitrary plausible $Q$,\footnote{Consider, for instance, Example \ref{ex:mainillustration} with $p(\theta_{1})= p(\theta_{2})=p(\theta_{3})=1/3$ and $\varepsilon=1/2$. The matrix $\tilde{Q}= \begin{pmatrix} 1/2 & 1/6 & 1/6 & 1/6 \\ 1/2 & 1/2 & 0 & 0 \\ 1/2 & 0 & 1/2 & 0 \\ 1/4 & 1/4 & 1/4 & 1/4 \end{pmatrix}$ is plausible, but $(B^{T}B)^{-1} B^{T}\tilde{Q}= \begin{pmatrix} 25/48 & 23/48 & -1/48 & 1/48 \\ 25/48 & -1/48 & 23/48 & 1/48 \\ 13/48 & 11/48 & 11/48 & 13/48 \end{pmatrix}$.} meaning that not all plausible $Q$ necessarily yield information structures. Still, the set of $Q$ which do is of the same dimensionality. 

An implication of the previous two results is that the space of $B,Q$ generated by informational environments is non-generic in the space of plausible $B$ and $Q$. Intuitively, this restriction is due to the prior.  A matrix $B$ and prior $p$ determine an information structure, so the set of possible information structures inducing $B$ has the same dimension as the set of possible priors. But the set of possible $Q$ that could emerge is \emph{much} larger than the set of possible priors. The next section further underscores this point.

\subsection{Relaxing Common Priors} 

Note that an assumption of this framework is that beliefs are updated from a fixed prior, $p$; this property, in turn, is part of what drives the dimensionality of the set of $Q$, since $\abs{\Theta}-1$ is the set of priors which induce a fixed belief matrix $B$ (which will have linearly independent columns whenever $B$ is full rank). This raises the question of how restrictive this assumption is; I now comment how relaxations of the common prior can rationalize any $Q$. 

I maintain the framework that there are $\abs{S}$ belief types in the population, with state belief matrix $B$ and hypothetical belief matrix $Q$. I emphasize that the $k$th row of the state belief matrix represents the $k$th belief type's probability assessment over $\Theta$; the $k$th row of the hypothetical belief matrix, as before, represents the $k$th belief type's expected proportion of belief types in the population. 

However, rather than suppose that each belief type is generated by a fixed information structure $\mathcal{I}$ and prior $p$, consider the case where each belief type $s$ maintains that \emph{their} beliefs were arrived at from updating a belief-type dependent prior, say $p^{s} \in \Delta(\Theta)$, and belief-type-dependent information structure, say $\mathcal{I}^{s} : \Theta \rightarrow \Delta(S)$. Under these assumptions, the matrix $B$ and $Q$ could always be rationalized as follows: 

\begin{equation} 
p^{s} = (b_{s, \theta})_{\theta \in \Theta}, ~~~ \mathcal{I}^{s}(\theta)[\tilde{s}]=q_{s,\tilde{s}}; \label{eq:anything}
\end{equation}

\noindent that is, where every individual in the population thinks signals are uninformative, yet different belief types arrive at distinct posterior beliefs due to their differing priors.  Note that this alternative not only allows for disagreement over priors, but also disagreement over the informational content. 

A seemingly more constrained formulation is to assume that while the entire population updates beliefs using the same information structure, each belief type also has its own prior. That is, consider the same framework above, but assume that $\mathcal{I}$ does not depend on the belief type---thus, the population agrees over the informational content of $\mathcal{I}$, even though belief type $s$ updates using prior $p^{s}$. Despite appearances, it turns out that this alternative is no less constrained than one where individuals can disagree over the informational content of signals. The following result shows that this framework can completely rationalize any $Q$: 

\begin{prop} \label{prop:noncommon}
Suppose $B$ and $Q$ are plausible. Then there exists priors for each belief type, $p^{s} \in \Delta(\Theta)$, such that the state belief matrix $B$ and hypothetical belief matrix $Q$ emerge via Bayesian updating using prior $p^{s}$ and some information structure $\mathcal{I}$.  
\end{prop}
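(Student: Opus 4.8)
The plan is to build an explicit common information structure $\mathcal{I}$ together with one prior $p^{s}$ per belief type that generates $(B,Q)$ through (\ref{eq:bayes}) and (\ref{eq:totalprob}). The natural first attempt is to stay on the given state space and treat the two matrices separately: invert (\ref{eq:bayes}) row by row to recover a prior from each posterior, and solve (\ref{eq:totalprob}) for $\mathcal{I}$. The difficulty is that, written in matrix form, (\ref{eq:totalprob}) reads $Q=B\mathcal{I}^{T}$ with the \emph{posteriors} $b_{s,\theta}$ as weights; the priors do not enter this equation at all. Hence heterogeneous priors buy nothing directly for $Q$, and for many plausible pairs the only candidate $\mathcal{I}$ (for instance $B^{-1}Q$ when $B$ is square) has negative entries, so no information structure over $\Theta$ reproduces $Q$. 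Overcoming this is the crux.

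The device I would use is to enrich the \emph{latent} state space with an auxiliary label---legitimate precisely because the analyst never observes states. Take states to be pairs $(\theta,\ell)\in\Theta\times S$, let the common information structure depend only on the label,
\begin{equation*}
\mathcal{I}(\theta,\ell)[\tilde{s}]=q_{\ell,\tilde{s}},
\end{equation*}
which is non-negative and row stochastic exactly because $Q$ is plausible, and let the prior of type $s$ be concentrated on label $s$ while carrying the correct marginal over $\Theta$,
\begin{equation*}
p^{s}(\theta,\ell)=b_{s,\theta}\,\mathbf{1}[\ell=s].
\end{equation*}

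I would then verify the two requirements directly. Since the signal is uninformative about $\theta$ and the prior already pins the label to $\ell=s$, the posterior of type $s$ after any drawn signal $\tilde{s}$ is proportional to $b_{s,\theta}\,\mathbf{1}[\ell=s]\,q_{s,\tilde{s}}$, i.e.\ it places mass $b_{s,\theta}$ on $(\theta,s)$; marginalizing out the label recovers the state belief $b_{s}$, so $B$ emerges from (\ref{eq:bayes}). Because this posterior sits entirely on label $\ell=s$, equation (\ref{eq:totalprob}) collapses to $\sum_{\theta}q_{s,\tilde{s}}\,b_{s,\theta}=q_{s,\tilde{s}}$, so $Q$ emerges as well. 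Thus $B$ is carried entirely by the state-marginal of the prior and $Q$ entirely by the label-dependence of $\mathcal{I}$, and the two are decoupled by construction; this is the common-$\mathcal{I}$ analogue of the uninformative-signal construction in (\ref{eq:anything}).

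The main obstacle is therefore conceptual rather than computational: recognizing that the $Q$-constraint on the original state space is prior-free and genuinely binding, so the needed slack must be injected by enlarging the unobserved state space until each type's prior effectively reveals ``which posterior it is.'' A convenient feature is that the construction sidesteps any Bayes-denominator subtlety, because $\mathcal{I}$ is uninformative about $\theta$ and $p^{s}$ fixes the label, \emph{every} realized signal leaves the posterior concentrated on $\ell=s$ with $\Theta$-marginal $b_{s}$, so the identification of belief type $s$ with prior $p^{s}$ is unambiguous. Finally, that only plausibility of $Q$ is needed is exactly what guarantees $\mathcal{I}$ is a legitimate information structure, matching the hypothesis.
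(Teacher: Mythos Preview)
Your construction does not prove the proposition as stated. The statement requires $p^{s}\in\Delta(\Theta)$ and, by the setup preceding it, a common $\mathcal{I}:\Theta\to\Delta(S)$; you instead pass to the product space $\Theta\times S$, with priors in $\Delta(\Theta\times S)$ and an experiment defined on that product. Marginalizing your posterior over the label does recover $b_{s}$, but this is not the same as exhibiting a Blackwell experiment on the \emph{given} $\Theta$ together with priors on $\Theta$---which is exactly the ``seemingly more constrained formulation'' the proposition is meant to address. In effect you have rebuilt a version of (\ref{eq:anything}) on a larger space rather than the common-$\mathcal{I}$ claim on the fixed one. Your own diagnosis explains why the enlargement cannot be innocuous: you note that (\ref{eq:totalprob}) uses the posteriors $b_{s,\theta}$ and not the priors, so on $\Theta$ the equation $Q=B\mathcal{I}$ pins $\mathcal{I}$ down regardless of the $p^{s}$'s; the label coordinate is precisely what lets $\mathcal{I}$ vary with $s$, which the proposition forbids.

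The paper's argument stays on $\Theta$. It takes $\mathcal{I}=(B^{T}B)^{-1}B^{T}Q$---the unique solution of $Q=B\mathcal{I}$ when $B$ has full column rank---and then, for each $s$, sets $p^{s}(\theta)\propto b_{s,\theta}/\mathcal{I}(\theta)[s]$; a direct check shows that updating $p^{s}$ via $\mathcal{I}$ after signal $s$ returns $b_{s,\cdot}$, and (\ref{eq:totalprob}) then reproduces the $s$-th row of $Q$. Your worry that $(B^{T}B)^{-1}B^{T}Q$ can have negative entries for merely plausible $(B,Q)$ is well taken and is indeed a subtlety the paper's proof glosses over (it asserts nonnegativity of $r(\theta)$ by assuming $\mathcal{I}(\theta)[s]\geq 0$). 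But enlarging the state space sidesteps rather than resolves that issue, and in any case establishes a different statement from the one claimed.
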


While the observation in this section are suggestive that allowing for non-common priors can rationalize arbitrary $B$ and $Q$, ultimately my view is that the framework present is not satisfactory and that it may be necessary to elicit richer objects. For instance: While the information environment in (\ref{eq:anything}) yields identical $B$ and $Q$ as the information environment in Proposition \ref{prop:noncommon}, one can distinguish between them by asking someone of belief type $s$, ``what probability would you have assigned to state $\theta$ had your signal been $s' \neq s$?'' If information were generated as in  (\ref{eq:anything}), someone observing a signal $s$ would answer this question with $b_{s,\theta}$, for all $s'$, since in that example the entire population views signals as drawn independently from $\theta$. However, if $\mathcal{I} = (B^{T}B)^{-1}B^{T}Q$ is informative, then beliefs should respond to signals and not necessarily be constant, provided the prior is non-degenerate. While these kinds of questions may or may not be natural depending on the application, they are reminiscent of designs in experimental work employing the strategy method (see \cite{BrandtsCharness2011}), whereby a subject would be asked to state their beliefs following any possible signal realization that might be potentially observed (whether or not it actually is).

Proposition \ref{prop:noncommon} clarifies one way of interpreting $(B^{T}B)^{-1}B^{T}Q$ in cases where $Q$ lies outside of the $\abs{\Theta}-1$ dimensional subspace identified in Proposition \ref{prop:dimproof}. In these cases, it need not be possible to find a single prior such that, when individuals adopt $\mathcal{I}$ and update from this prior, $B$ and $Q$ are induced. On the other hand, this is possible if one allows for each belief type to have its own prior. 

I can illustrate this idea using the example from Section \ref{sect:generic}. Recall that when $a=b=9/16$, $(B^{T}B)^{-1}B^{T}Q$ cannot induce $B$ and $Q$ given a single prior. However, suppose signal $s_{1}$ (corresponding to the first row of $B$) starts with prior $(5/14,9/14)$. In this case, the probability assigned to the first state, when updating according to $(B^{T}B)^{-1}B^{T}Q$ is $\frac{(5/14)(3/8)}{(5/14)(3/8)+(9/14)(5/8)}=1/4$, which coincides with the relevant entry from $B$. And indeed, $(1/4)(3/8)+(3/4)(5/8)=9/16$, coinciding with relevant entry from $Q$. Thus, with this prior, belief type $s_{1}$ holds the appropriate state beliefs and hypothetical beliefs, provided the information structure is $(B^{T}B)^{-1}B^{T}Q$. Identical calculations show that, if signal $s_{2}$ (corresponding to the second row of $B$) were to start with prior $(9/14,5/14)$, then state beliefs would be the second row of $B$ and hypothetical beliefs would be the second row of $Q$. Thus, while $B$ and $Q$ could not be induced given a common prior, they can be induced via a model with non-common priors. Under these assumptions, the regression procedure will still identify $\mathcal{I}$.

\section{The $|\Theta| > |S|$ case}  \label{sect:morestates}

The regression characterizations no longer apply when there are more states than signals, in the same way that an Ordinary Least Squares requires more observations than covariates in order to yield an identified solution (i.e., $B$ must have more rows than columns, in addition to having no linear dependencies). The issue with my approach is that the question used to identify $\mathcal{I}$ has multiple solutions. Letting 
 $v$ be any vector in the null space of $B$, for any candidate solution $\mathcal{I}(\cdot)[\tilde{s}]$: 

\begin{equation} 
q_{\cdot,\tilde{s}}=B \cdot\mathcal{I}(\cdot)[\tilde{s}] \Rightarrow q_{\cdot,\tilde{s}}=B\cdot(\mathcal{I}(\cdot)[\tilde{s}]+v).  \label{eq:equationset}
\end{equation}

\noindent Thus, the set of solutions to the equation used to identify the column of $\mathcal{I}$ corresponding to $\tilde{s}$ has dimension equal to the null space of $B$, which is (generically) $\abs{\Theta}-\abs{S}$ dimensional.

Despite this multiplicity, one can still use regression ideas in order to determine the subspace of matrices which satisfy (\ref{eq:equationset}), even though $B^{T}B$ is not invertible when $\abs{\Theta}> \abs{S}$. Seeking to solve for a candidate information structure, as before, pre-multiply the matrix equation for the hypothetical beliefs by the state belief matrix $B$: 

\begin{equation*} 
B^{T}Q=B^{T}B \mathcal{I}
\end{equation*}

\noindent If $\abs{S}< \abs{\Theta}$, $B^{T}B$ is not invertible. The idea is to augment this equation so that an inversion can occur. Adding $M\mathcal{I}$, for any matrix $I$, to both sides of this equation:

\begin{equation*} 
B^{T}Q+M\mathcal{I}= (B^{T}B+M) \mathcal{I}.
\end{equation*}

\noindent If $B^{T}B + M$ is invertible, one can obtain an expression where $\mathcal{I}$ only appears on the right hand side (multiplied by a matrix factor). A natural choice is $M = \lambda I$, where $I$ is the identity matrix, yielding:

\begin{equation} 
B^{T}Q + \lambda \mathcal{I} = (B^{T}B + \lambda I) \mathcal{I} \Rightarrow (B^{T}B+ \lambda I)^{-1}B^{T}Q = \mathcal{I}(1- (B^{T}B + \lambda I)^{-1} \lambda I) \label{eq:ridgeequation}
\end{equation}

\noindent  A necessary and sufficient condition for the right hand side to converge to $\mathcal{I}$ as $\lambda \rightarrow 0$ is:

\begin{equation} 
(B^{T}B + \lambda I)^{-1} \lambda I \rightarrow 0. \label{eq:limit}
\end{equation}

\noindent  In statistics, the idea of adding $\lambda I$ in order to be able to invert $B^{T}B$ is used to motivate \emph{ridge regression}; in cases where more general perturbations are considered, this process is also known as \emph{Tikhonov regularization.}

Unfortunately, the condition (\ref{eq:limit}) is not necessarily satisfied; but as Theorem \ref{thm:morestates} shows, the solution in the limit as $\lambda \rightarrow 0$ will nevertheless be a solution to $Q=B\mathcal{I}$.\footnote{In statistical applications where this method is used, taking $\lambda$ too small is often undesirable. See \cite{RidgeTextbook} for more on what guides the choice of $\lambda$ in practice.} 

\begin{theorem} \label{thm:morestates}
Suppose $\abs{\Theta} > \abs{S}$. The matrix: 
\begin{equation*}
\tilde{\mathcal{I}}=\lim_{\lambda \rightarrow 0} (B^{T}B + \lambda I)^{-1}B^{T}Q,
\end{equation*} 

\noindent exists, is well-defined, and solves the equation $B=Q\tilde{\mathcal{I}}$. Furthermore, 

\begin{itemize} 
\item Let $v_{1}, v_{2}, \ldots, v_{k}$ be a basis for the null-space of $B$. Then $\mathcal{I}(\cdot)[s_{i}]=\tilde{\mathcal{I}}(\cdot)[s_{i}]+\sum_{j} \alpha_{j} v_{j}$, for some $\alpha_{1},\ldots, \alpha_{k} \in \mathbb{R}$. 

\item The prior $p$ is the unique (unit) eigenvector of $B^{T}\tilde{\mathcal{I}}^{T}$ with eigenvalue 1.

\end{itemize} 

\end{theorem}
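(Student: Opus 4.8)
The plan is to recognize the limit as a Moore--Penrose pseudoinverse and then read off every conclusion from standard pseudoinverse identities together with the martingale equation already derived in Section \ref{sect:mainresults}. First I would show that $\lim_{\lambda \to 0^{+}} (B^{T}B + \lambda I)^{-1}B^{T} = B^{+}$, the pseudoinverse of $B$: writing a singular value decomposition $B = U\Sigma V^{T}$, the operator $(B^{T}B + \lambda I)^{-1}B^{T}$ has diagonal factors $\sigma_{i}/(\sigma_{i}^{2}+\lambda)$, each of which tends to $1/\sigma_{i}$ when $\sigma_{i}>0$ and to $0$ when $\sigma_{i}=0$, so the limit exists and equals $V\Sigma^{+}U^{T}=B^{+}$ (this is Part~1). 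Since $Q$ is constant in $\lambda$, $\tilde{\mathcal{I}} = B^{+}Q$. Because $B$ is full rank with $\abs{S} < \abs{\Theta}$ it has full row rank, so $BB^{T}$ is invertible, $B^{+} = B^{T}(BB^{T})^{-1}$, and $BB^{+} = I_{\abs{S}}$; hence $B\tilde{\mathcal{I}} = BB^{+}Q = Q$, the solution property (the displayed $B=Q\tilde{\mathcal{I}}$ should read $Q = B\tilde{\mathcal{I}}$, matching the discussion preceding the theorem).

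For the first bullet, I would use that the true environment satisfies $Q = B\mathcal{I}$ by (\ref{eq:totalprob}), so $B(\mathcal{I}-\tilde{\mathcal{I}}) = 0$; each column of $\mathcal{I}-\tilde{\mathcal{I}}$ therefore lies in the null space of $B$, giving $\mathcal{I}(\cdot)[s_{i}] = \tilde{\mathcal{I}}(\cdot)[s_{i}] + \sum_{j}\alpha_{j}v_{j}$ for suitable scalars $\alpha_{j}$. For the eigenvector claim, the substance lies in a projection identity: using $\tilde{\mathcal{I}} = B^{+}Q$, $Q^{T} = \mathcal{I}^{T}B^{T}$, and $(B^{+})^{T} = (B^{T})^{+}$, I would rewrite $B^{T}\tilde{\mathcal{I}}^{T} = B^{T}\mathcal{I}^{T}\,[\,B^{T}(B^{T})^{+}\,] = B^{T}\mathcal{I}^{T}P$, where $P = B^{T}(B^{T})^{+}$ is the orthogonal projection onto the column space of $B^{T}$, i.e.\ the row space of $B$. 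The martingale identity from Section \ref{sect:mainresults} gives $p = B^{T}\mathbb{P}$, so $p$ lies in this column space and $Pp = p$; combined with $B^{T}\mathcal{I}^{T}p = p$, this yields $B^{T}\tilde{\mathcal{I}}^{T}p = B^{T}\mathcal{I}^{T}Pp = B^{T}\mathcal{I}^{T}p = p$, so $p$ is an eigenvalue-$1$ eigenvector.

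The hard part will be uniqueness, because $\tilde{\mathcal{I}}$ is in general neither row-stochastic nor nonnegative: its row sums are $B^{+}\mathbf{1}_{\abs{S}}$, which need not equal $\mathbf{1}_{\abs{\Theta}}$, so Perron--Frobenius cannot be applied to $B^{T}\tilde{\mathcal{I}}^{T}$ directly. The resolution I would use is that any eigenvalue-$1$ eigenvector $x$ satisfies $x = B^{T}\tilde{\mathcal{I}}^{T}x$ and hence lies in the column space of $B^{T}$; on that subspace $P$ acts as the identity, so $B^{T}\tilde{\mathcal{I}}^{T}$ agrees with the genuine column-stochastic matrix $B^{T}\mathcal{I}^{T}$, which moreover maps this subspace into itself since its range is contained in the column space of $B^{T}$. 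Thus the eigenvalue-$1$ eigenvectors of $B^{T}\tilde{\mathcal{I}}^{T}$ coincide with those of $B^{T}\mathcal{I}^{T}$ lying in the row space of $B$, and the Perron--Frobenius uniqueness invoked in Theorem \ref{thm:identification} (under irreducibility of $\mathcal{I}B$) pins these down to $p$ up to scaling; normalizing the entries to sum to one then recovers $p$ uniquely.
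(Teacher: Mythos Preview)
Your argument is correct and takes a genuinely different route from the paper in two places. For the existence of the limit, you use the SVD to recognize $\lim_{\lambda\to 0}(B^{T}B+\lambda I)^{-1}B^{T}$ as the Moore--Penrose pseudoinverse $B^{+}$, and then invoke the full-row-rank identity $BB^{+}=I_{\abs{S}}$ to get $B\tilde{\mathcal{I}}=Q$ immediately; the paper instead argues directly via the Matrix Determinant Lemma that the determinant of $B^{T}B+\lambda I$ vanishes to low enough order for the limit to exist, and then uses the variational (ridge) characterization to conclude the limit solves $Q=B\tilde{\mathcal{I}}$. Your route is shorter and identifies $\tilde{\mathcal{I}}$ explicitly as the minimum-norm solution; the paper's is more self-contained but longer. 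For the eigenvector claim, your projection identity $B^{T}\tilde{\mathcal{I}}^{T}=B^{T}\mathcal{I}^{T}P$ with $P=B^{T}(B^{T})^{+}$, together with $p\in\mathrm{range}(B^{T})$, gives both existence and uniqueness by showing that the eigenvalue-$1$ eigenspace of $B^{T}\tilde{\mathcal{I}}^{T}$ sits inside $\mathrm{range}(B^{T})$ and coincides there with that of $B^{T}\mathcal{I}^{T}$. The paper instead passes through the signal-prior vector $\mathbb{P}$ (the Perron eigenvector of $Q$) to verify $B^{T}\tilde{\mathcal{I}}^{T}p=p$, and for uniqueness uses the Sylvester-type identity $p_{B^{T}\tilde{\mathcal{I}}^{T}}(\lambda)=\lambda^{\abs{\Theta}-\abs{S}}p_{\tilde{\mathcal{I}}^{T}B^{T}}(\lambda)$ together with $\tilde{\mathcal{I}}^{T}B^{T}=\mathcal{I}^{T}B^{T}$ to conclude that $B^{T}\tilde{\mathcal{I}}^{T}$ and $B^{T}\mathcal{I}^{T}$ share a characteristic polynomial, so the algebraic (hence geometric) multiplicity of $1$ is one. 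Your geometric argument is more direct for the specific eigenvalue $1$; the paper's characteristic-polynomial argument yields the stronger conclusion that the full spectra agree. Both rely on the same irreducibility hypothesis on $\mathcal{I}B$ inherited from Theorem~\ref{thm:identification}, which you correctly flag.
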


\noindent The Theorem shows that the $\lambda \rightarrow 0$ limit is equal to a solution of the equation defining hypothetical beliefs in terms of state beliefs; it turns out that this property is sufficient in order to recover the eigenvector interpretation of the prior, \emph{whether or not the solution is in fact the true information structure}. While this does not imply that $\tilde{\mathcal{I}}$ is itself the information structure inducing $B$---or, for that matter, even an information structure---as per the discussion above, this does nevertheless provide some meaningful information which can in some cases be used to pin down the information structure. 

The proof shows that the same argument illustrating that the prior is a unit eigenvector of $B^{T}\mathcal{I}^{T}$ with eigenvalue 1 can be used to show that it is an eigenvector of $B^{T} \tilde{\mathcal{I}}^{T}$, and proceeds to show that indeed one must exist. The following example illustrates: 

\begin{example} 
Suppose: 

\begin{equation*} 
B= \begin{pmatrix} \frac{2}{3} & \frac{1}{3} & 0 \\ \frac{4}{9} & \frac{1}{9} & \frac{4}{9} \end{pmatrix}, ~~~~~~ Q= \begin{pmatrix} \frac{7}{18} & \frac{11}{18} \\ \frac{11}{54} & \frac{43}{54} \end{pmatrix}. 
\end{equation*}

\noindent In this case, I compute $\tilde{\mathcal{I}}$ to be:

\begin{equation*} 
\tilde{\mathcal{I}}=\begin{pmatrix} \frac{29}{63} & \frac{52}{63} \\ \frac{31}{126} & \frac{23}{126} \\ - \frac{4}{63} & \frac{58}{63} \end{pmatrix}
\end{equation*}

Obviously, $\tilde{\mathcal{I}}$ is not an information structure, as each row violates the two requirements to be probability distributions: non-negative entries and summing to 1. The nullspace of $B$ is spanned by a single vector, $(-2,4,1)$. Adding a multiple of this vector to the first column of $\tilde{\mathcal{I}}$ to make sure all entries are non-negative yields:

\begin{equation*} 
\begin{pmatrix} \frac{1}{3} & \frac{52}{63} \\ \frac{1}{2} & \frac{23}{126} \\ 0 & \frac{58}{63} \end{pmatrix}
\end{equation*}

\noindent Adding another multiple of $(-2,4,1)$ to the second column allows each row to sum to 1, and yields: 

\begin{equation*} 
\mathcal{I}= \begin{pmatrix} \frac{1}{3} & \frac{2}{3} \\ \frac{1}{2} & \frac{1}{2} \\ 0 & 1 \end{pmatrix}
\end{equation*}

\noindent which indeed induces these belief matrices. I compute: 

\begin{equation*} 
B^{T}\mathcal{I}^{T}= \begin{pmatrix} 14/27 & 5/9 & 4/9 \\ 5/27 & 2/9 & 1/9 \\ 8/27 & 2/9 & 4/9 \end{pmatrix} ,~~~~ B^{T} \tilde{\mathcal{I}}^{T} = 
 \frac{1}{567}\begin{pmatrix} 382 & 139 & 208 \\ 139 & 58 & 46 \\ 208 & 46 & 232
\end{pmatrix}.
\end{equation*}

\noindent Indeed, for both of these matrices, there is a unique (up to scaling) eigenvector with eigenvalue 1, and it is $(1/2,1/6,1/3)$. Bayes rule verifies that this matrix, together with $\mathcal{I}$, induces $B$. 
\end{example}

\medskip

Note that knowledge of the prior would allow the analyst to learn the information structure itself. That is, given the prior $p(\theta)$, the analyst could determine signal weights $(\alpha_{s})_{s \in S}$ such that $p(\theta)=\sum_{s} \alpha_{s} b_{\theta,s}$ (which exist since $p(\theta)$, as the prior, is in the interior of the convex hull of the columns of $B$). As pointed out in Proposition 1 of \cite{KG2011}, one then has $\mathcal{I}(\theta)[s]=b_{\theta,s}\alpha_{s}/p(\theta)$.

Despite this observation, as I make clear in the next Section, I view Theorem \ref{thm:morestates} as largely of theoretical interest, as it demonstrates that the martingale condition for beliefs pins down the prior given any candidate solution to the equation defining hypothetical beliefs. While this helps clarify what drives the identification of the prior, I do not wish to overstate its practical significance; it seems more direct to simply obtains $\mathbb{P}[s]$ from $Q$ rather than $p(\theta)$ from $B^{T}\mathcal{I}^{T}$; as I discuss in the next section, this provides a more immediate path to the information structure. The key point is that (i) regression techniques can still recover solutions for $I$ of the equation $Q=BI$, even though (ii) with more states than signals, that may not be enough to recover the information structure, but that (iii) the eigenvector interpretation of the prior remains even in this case. 

\section{Comparison to The Signal Priors Approach}  \label{sect:SP}
\cite{PSM2013,PSM2017} and \cite{PM2022} provide an alternate algorithm which can be used to derive the matrix $\mathcal{I}$.\footnote{I am grateful to a referee who provided essentially all of the insights of this section.} As alluded to above, the crux of this approach is to use $Q$ directly to obtain the (ex-ante) distribution over the signals in the population. That is: Given the matrix $Q$, note that if $v$ is an eigenvector of $Q$ with eigenvalue 1 (which, again, exists by Frobenius-Perron), then $v$ is proportional to $\Prob[s]$; indeed, $\sum_{s} \Prob[s^{*} \mid s]\Prob[s]= \Prob[s^{*}]$. Thus, normalizing the eigenvector of $Q$ so that the entries sum to 1 provides a way of arriving at the (ex-ante) probability distribution over \emph{belief types}. In this case, one has $p(\theta)\mathcal{I}(\theta)[s]=b_{s,\theta}\Prob[s]$. To obtain $p(\theta)$, simply consider $\sum_{\tilde{s}}b_{\tilde{s},\theta}\Prob[\tilde{s}]$. This delivers the information structure $\mathcal{I}$.

This method, called the Signal Priors Approach (henceforth SP) is complementary to the regression approach, and each may have merits depending on the application. For instance, while SP clearly requires $Q$, it can identify the relevant signal distribution even without any information about $B$ at all. Hence if $Q$ is fully specified and $B$ is not, then this approach allows for part of the information structure to be recovered nevertheless. 

On the other hand, there are instances where the regression approach of this paper may be beneficial. Perhaps most significantly, it may be that $B$ is fully specified, but not all of $Q$ is. In that case, one can compute $(B^{T}B)^{-1}B^{T}q$ given a column $q$, and still find the probability that that signal is induced in each state $\theta$. Note that, while this would not identify the full information structure, this procedure could still be used to determine the state $\theta$, the typical focus of the crowd wisdom problem. More precisely: if $q$ corresponds to signal $s$, and if $(B^{T}B)^{-1}B^{T}q$ yields $\abs{\Theta}$ distinct entries, then the analyst could learn the true $\theta$ by seeing which entry matched the empirically observed proportion of the population receiving signal $s$. 

While the extent to which this would work in practice is difficult to assess theoretically (e.g., it may depend on noise in the population, or how many beliefs are elicited), one can imagine there being cases where eliciting the full matrix $Q$ is significantly more difficult to obtain than a single column of it. For instance, suppose there are a very large number of possible signals, and an analyst needs to elicit the probabilities of each signal type from the population. In this case, it  might be prohibitive to elicit, from each individual, the \emph{entire} vector of proportions that each signal appears in the population. However, rather than eliciting $\abs{\Theta} + \abs{S}$ beliefs, as SP requires, my method implies that one could (generically) learn the state by eliciting only $\abs{\Theta}+1$ beliefs. 

This discussion suggests that the regression approach is perhaps more suitable when the number of signals is larger than the number of states. However, the results I provide on the case of $\abs{\Theta} > \abs{S}$ are largely theoretical, in that they suggest properties of the information structure, but actual computation of it can be more cumbersome; more to the point, if the analyst were able to elicit beliefs over $\Theta$, then the $\abs{\Theta} > \abs{S}$ case is precisely when eliciting beliefs over $S$ entails fewer questions than eliciting beliefs over $\Theta$ in the first place. Thus, it appears that SP is more appropriate from a practical perspective when $\abs{\Theta} > \abs{S}$.



\section{Related Literature} 
\subsection{Wisdom of the Crowd and Predicting Peer Predictions} 
As discussed above, the paper borrows the framework of \cite{PSM2017} and \cite{PM2022}, which studied how an analyst may be able to infer a true state using predictions regarding the reports of other members of a population. Their use of higher-order beliefs built on \cite{Prelec2004}, which proposed using this additional data in order to induce \emph{truthful reporting} (as opposed to aggregating information). \cite{PSM2017} showed how to infer the true state using a mechanism referred to as the \emph{surprisingly popular algorithm}, and showed how it could aggregate information in the two-state case, as well as some other special cases. The generalization to arbitrary numbers of signals and states, discussed in Section \ref{sect:SP}, was presented in \cite{PM2022}, contemporaneous with this paper. Results illustrating the general difficulty of learning the true state \emph{without} such information also appear in  \cite{ABS2020}.

Among the papers in this literature, most related to my approach is the contemporaneous paper of \cite{CMP2021}. Their framework takes as a primitive an infinite population and a probability distribution over states and signals, without presuming the structure of an informational environment as described in Definition \ref{def:informationenviron} (whereas this paper is focused on recovering this experiment when it exists). Their main results show how, in the large sample limit, inverting a matrix of average beliefs allows the analyst to asymptotically recover the true state using predictions of the average population beliefs. They do so using assumptions which roughly speaking allow law of large number arguments to be applied. This matrix inversion is reminiscent of regression; however, unlike this paper and \cite{PM2022}, their proposal with a general number of states essentially requires as many distinct beliefs as states to be observed. My assumptions on signals and states, by contrast, implies a simple inversion need not be feasible. 

I have not considered the practical elicitation of the matrices $B$ and $Q$, as was the focus of \cite{Prelec2004}. See, for instance, \cite{CPRT2019} or
\cite{WP2012b} which suggests proposals for how to elicit this information given natural limitations an analyst might face (especially finiteness of the population). Robustness with respect to common priors is also discussed in \cite{WP2012a}, although recall that in the context of my exercise, relaxing common priors too strongly essentially implies no restrictions at all on $B$ and $Q$.

\subsection{Identification of Information}

This paper joins a literature which studies the identification of information. Typically, the focus in this literature is on what can be identified by a given dataset consisting of choices. By contrast, my approach takes beliefs themselves as a primitive, and focuses on the linear relationships between the information structure and the (state and hypothetical)  beliefs in the population. An example of the former approach is \cite{AM2017}, who study how to identify a decision-maker's belief over states given a decision problem. Closer to the current paper is \cite{Lu2016}, who asks when it is possible to identify a decision-maker's information, \emph{together with their utility function} from \emph{choice} data, if their stochastic choice is generated by a distribution over posterior beliefs; this is possible in certain restrictive cases. The question of identifying the information structure from choices in games featuring a linear quadratic normal structure is considered by \cite{Miyashita2022}, building on a framework of \cite{BergemannMorris2013} which also considered a similar question. Similar  exercises aimed at identifying information are carried out in   
\cite{LomysTarantino2022} (studying search) and \cite{Jakobsen21} (studying Bayesian Persuasion). 

\subsection{Regression Methods}

Methodologically, this paper shows how ideas from regression (and linear algebra more generally) facilitate tractable analysis of otherwise complicated phenomena. The analysis of \cite{Miyashita2022} in linear quadratic normal games relies upon similar ideas, although the data available is more restricted (and in particular, often confounded by the game structure) and as such the exact information structure can often not be identified. \cite{Caradonna2021} shows how regression determines a ``best fitting'' preference representation within a class of preferences, using this to quantify the predictive accuracy of a model. The applicability of these methods are likely not limited to statistical inference exercises; for instance, \cite{Ball2021} considers a model where an intermediary assigns a score to a strategic sender interested in biasing a prediction that is subsequently made based on the score. In his setting, the intermediary's objective reduces to a regression equation. 

\subsection{Priors as Eigenvectors} 
Aside from the complementary results in \cite{PM2022} (as well as \cite{PSM2013})   that $(\mathbb{P}[s])_{s \in S}$ has an eigenvector interpretation, my observation that the prior (over $\Theta$) has an eigenvector interpretation is perhaps most reminiscent of results \cite{SametIteratedExp}, further explored by \cite{SametSeparation} and \cite{GolubMorris}. \cite{SametIteratedExp} characterized a \emph{common} prior as the eigenvector of a stochastic transition matrix obtained from an information structure involving multiple agents (specifically, Propositions 3 and 5).\footnote{Specifically, the existence of the common prior is equivalent to this eigenvector being independent of the order in which agents are considered in defining this stochastic transition matrix. See \cite{Hellman2011} for a generalization of this result to infinite spaces.} These results provide an \emph{interim} characterization of the common prior assumption, i.e., after signals have been observed.\footnote{Several other papers have used either the Markov chain interpretation of interim beliefs introduced by \cite{SametIteratedExp}, or properties implied by the stationary distribution characterization of the common prior; see, for instance, \cite{MorrisShin}, \cite{CEMS2008}, or \cite{AngeletosLian}.}  \cite{GolubMorris} explore this idea further, roughly speaking using similar ideas to move from studying ``beliefs about beliefs'' to ``expectations about expectations''---characterizing implications, for instance, of assuming these expectations do not depend on the order agents are considered. They invoke the Frobenius-Perron theorem to characterize the ``consensus expectation'' with multiple agents.  

This paper's motivation is similar, though I am not only interested in identifying the prior from interim beliefs but also the information structure (whereas \cite{SametIteratedExp} does not assume a common prior but is interested in \emph{characterizing} when one exists). I note that strictly speaking the relevant matrices also differ, in part since the formal framework in this paper differs from \cite{SametIteratedExp}; there, agents observe an element of a partition of the state space, which in particular is generated \emph{deterministically} as a function of the state. That said, 
 higher-order beliefs are less restricted since it is not assumed that beliefs are generated in a conditionally-IID manner. These differences are significant, to the point where translating one result directly to the other does not seem immediate.

\section{Conclusion}
It is typically not possible for an analyst to identify an information structure from the set of possible ex-post beliefs over the states which generate them---any prior in the interior of the convex hull of these beliefs could generate those signals under an appropriately chosen information structure. This paper shows not only that the information structure and prior can be recovered using data obtainable after beliefs are formed---by asking about the perceived proportion of beliefs in the population---but also uncovered some geometric logic underlying these objects. First, the information structure can be identified (at least when there are as many signals as states) using a regression procedure, and second, that the prior can be identified as the eigenvector of a matrix obtained using the identified information structure.  

One potentially disconcerting feature of the analysis is that the dimensionality of the set of plausible hypothetical belief matrices $Q$ is much larger relative to the dimensionality of the set of hypothetical belief matrices $Q$ which could emerge given $\mathcal{I}$. This suggests that the underlying framework may be easily falsified in practice---if individuals only report beliefs imprecisely, then it may be that the observed $Q$ falls outside of this smaller dimensional space. While there are ways this could be accommodated (e.g., by relaxing common priors), it may be desirable to have a relaxation which would accommodate richer $Q$ without simply stating that ``anything goes.'' These questions seem worthwhile but are beyond the scope of the current paper. 

Two further avenues seem worthwhile of pursuit. First, the comparison to the Signal Priors approach suggests that there are multiple methods at an analyst's disposal to identify either the information structure or the underlying state. If there are many signals, then it will generically be possible to recover the state by eliciting a fewer set of beliefs relative to what the Signal Priors approach requires. An interesting question is which kinds of data are best suited (both theoretically and practically) to answering a relevant question an analyst might be interested in. Along these lines, it also seems worthwhile to grapple with the practical issues related to elicitation which this paper has abstracted from (but addressed by some papers discussed above). 

Second, the use of regression in order to identify information seems worthy of further exploration. A Bayesian agent who computes an expected value, by definition, computes as a linear combination (i.e., a weighted average) of several parameters. Regression is simply a way of inverting this relationship, to find the parameters as a function of beliefs. As discussed in the literature review, this paper is not alone in noting that regression is a useful method for this reason. At the same time, by finding one application where certain regression techniques are useful, it may very well be that more elaborate regression techniques could allow for more realistic assumptions or speak to different applications and questions. This could be true even restricting to the methods used in the paper at hand, but the connection between these literatures seems worthwhile to explore further. 

\newpage

\bibliographystyle{ecta}
\bibliography{contingentbib}

\begin{thebibliography}{30}
\newcommand{\enquote}[1]{``#1''}
\expandafter\ifx\csname natexlab\endcsname\relax\def\natexlab#1{#1}\fi

\bibitem[\protect\citeauthoryear{Angeletos and Lian}{Angeletos and
  Lian}{2018}]{AngeletosLian}
\textsc{Angeletos, G.-M. and C.~Lian} (2018): \enquote{Forward Guidance without
  Common Knowledge,} \emph{American Economic Review}, 108, 2477--2512.

\bibitem[\protect\citeauthoryear{Arieli, Babichenko, and Smorodinsky}{Arieli
  et~al.}{2020}]{ABS2020}
\textsc{Arieli, I., Y.~Babichenko, and R.~Smorodinsky} (2020):
  \enquote{Identifiable information structures,} \emph{Games and Economic
  Behavior}, 120, 16--27.

\bibitem[\protect\citeauthoryear{Arieli and Mueller-Frank}{Arieli and
  Mueller-Frank}{2017}]{AM2017}
\textsc{Arieli, I. and M.~Mueller-Frank} (2017): \enquote{Inferring Beliefs
  from Actions,} \emph{Games and Economic Behavior}, 102, 455--461.

\bibitem[\protect\citeauthoryear{Aumann and Maschler}{Aumann and
  Maschler}{1995}]{AM1995}
\textsc{Aumann, R.~J. and M.~Maschler} (1995): \emph{Repeated Games with
  Incomplete Information}, MIT Press.

\bibitem[\protect\citeauthoryear{Ball}{Ball}{2021}]{Ball2021}
\textsc{Ball, I.} (2021): \enquote{Scoring Strategic Agents,} \emph{Working
  Paper}.

\bibitem[\protect\citeauthoryear{Bergemann and Morris}{Bergemann and
  Morris}{2013}]{BergemannMorris2013}
\textsc{Bergemann, D. and S.~Morris} (2013): \enquote{Robust Predictions in
  Games With Incomplete Information,} \emph{Econometrica}, 81, 1251--1308.

\bibitem[\protect\citeauthoryear{Brandts and Charness}{Brandts and
  Charness}{2011}]{BrandtsCharness2011}
\textsc{Brandts, J. and G.~Charness} (2011): \enquote{The strategy versus the
  direct-response method: a first survey of experimental comparisons,}
  \emph{Experimental Economics}, 14, 375--398.

\bibitem[\protect\citeauthoryear{Caradonna}{Caradonna}{2021}]{Caradonna2021}
\textsc{Caradonna, P.} (2021): \enquote{Preference Regression,} \emph{Working
  Paper}.

\bibitem[\protect\citeauthoryear{Chen, Mueller-Frank, and Pai}{Chen
  et~al.}{2021}]{CMP2021}
\textsc{Chen, Y.-C., M.~Mueller-Frank, and M.~Pai} (2021): \enquote{The Wisdom
  of the Crowd and Higher Order Beliefs,} \emph{Working Paper}.

\bibitem[\protect\citeauthoryear{Cripps, Ely, Mailath, and Samuelson}{Cripps
  et~al.}{2008}]{CEMS2008}
\textsc{Cripps, M., J.~Ely, G.~Mailath, and L.~Samuelson} (2008):
  \enquote{Common Learning,} \emph{Econometrica}, 76, 909--933.

\bibitem[\protect\citeauthoryear{Cvitani\'c, Prelec, Riley, and
  Tereick}{Cvitani\'c et~al.}{2019}]{CPRT2019}
\textsc{Cvitani\'c, J., D.~Prelec, B.~Riley, and B.~Tereick} (2019):
  \enquote{Honesty via Choice-Matching,} \emph{American Economic Review:
  Insights}, 1, 179--192.

\bibitem[\protect\citeauthoryear{Golub and Morris}{Golub and
  Morris}{2017}]{GolubMorris}
\textsc{Golub, B. and S.~Morris} (2017): \enquote{Higher-Order Expectations,}
  \emph{Working Paper, Northwestern University and MIT}.

\bibitem[\protect\citeauthoryear{Hellman}{Hellman}{2011}]{Hellman2011}
\textsc{Hellman, Z.} (2011): \enquote{Iterated Expectations, Compact Spaces and
  Common Priors,} \emph{Games and Economic Behavior}, 72, 163--171.

\bibitem[\protect\citeauthoryear{Horn and Johnson}{Horn and
  Johnson}{2013}]{HornJohnson}
\textsc{Horn, R.~A. and C.~R. Johnson} (2013): \emph{Matrix Analysis},
  Cambridge University Press, second ed.

\bibitem[\protect\citeauthoryear{Jakobsen}{Jakobsen}{2021}]{Jakobsen21}
\textsc{Jakobsen, A.~M.} (2021): \enquote{An Axiomatic Model of Persuasion,}
  \emph{Econometrica}, 89, 2081--2116.

\bibitem[\protect\citeauthoryear{Kamenica and Gentzkow}{Kamenica and
  Gentzkow}{2011}]{KG2011}
\textsc{Kamenica, E. and M.~Gentzkow} (2011): \enquote{Bayesian Persuasion,}
  \emph{American Economic Review}, 101, 2590--2615.

\bibitem[\protect\citeauthoryear{Lomys and Tarantino}{Lomys and
  Tarantino}{2022}]{LomysTarantino2022}
\textsc{Lomys, N. and E.~Tarantino} (2022): \enquote{Identification in Search
  Models with Social Information,} \emph{Working Paper}.

\bibitem[\protect\citeauthoryear{Lu}{Lu}{2016}]{Lu2016}
\textsc{Lu, J.} (2016): \enquote{Random Choice and Private Information,}
  \emph{Econometrica}, 84, 1983--2027.

\bibitem[\protect\citeauthoryear{Meyer}{Meyer}{1995}]{Meyer2000}
\textsc{Meyer, C.~D.} (1995): \emph{Matrix Analysis and Applied Linear
  Algebra}, SIAM.

\bibitem[\protect\citeauthoryear{Miyashita}{Miyashita}{2022}]{Miyashita2022}
\textsc{Miyashita, M.} (2022): \enquote{Identification of Information
  Structures in Bayesian Games,} \emph{Working paper}.

\bibitem[\protect\citeauthoryear{Morris and Shin}{Morris and
  Shin}{2002}]{MorrisShin}
\textsc{Morris, S. and H.~S. Shin} (2002): \enquote{Social Value of Public
  Information,} \emph{American Economic Review}, 92, 1521--1534.

\bibitem[\protect\citeauthoryear{Prelec}{Prelec}{2004}]{Prelec2004}
\textsc{Prelec, D.} (2004): \enquote{A Bayesian Truth Serum for Subjective
  Data,} \emph{Science}, 306, 462--466.

\bibitem[\protect\citeauthoryear{Prelec and McCoy}{Prelec and
  McCoy}{2022}]{PM2022}
\textsc{Prelec, D. and J.~McCoy} (2022): \enquote{General identifiability of
  possible world models for crowd wisdom.} \emph{Working Paper}.

\bibitem[\protect\citeauthoryear{Prelec, Seung, and McCoy}{Prelec
  et~al.}{2013}]{PSM2013}
\textsc{Prelec, D., H.~S. Seung, and J.~McCoy} (2013): \enquote{Finding Truth
  Even if the Crowd is Wrong,} \emph{Working Paper}.

\bibitem[\protect\citeauthoryear{Prelec, Seung, and McCoy}{Prelec
  et~al.}{2017}]{PSM2017}
---\hspace{-.1pt}---\hspace{-.1pt}--- (2017): \enquote{A solution to the
  single-question crowd wisdom problem,} \emph{Nature}, 541, 532--535.

\bibitem[\protect\citeauthoryear{Samet}{Samet}{1998{\natexlab{a}}}]{SametSeparation}
\textsc{Samet, D.} (1998{\natexlab{a}}): \enquote{Common Priors and Separation
  of Convex Sets,} \emph{Games and Economic Behavior}, 24, 172--174.

\bibitem[\protect\citeauthoryear{Samet}{Samet}{1998{\natexlab{b}}}]{SametIteratedExp}
---\hspace{-.1pt}---\hspace{-.1pt}--- (1998{\natexlab{b}}): \enquote{Iterated
  Expectations and Common Priors,} \emph{Games and Economic Behavior}, 24,
  131--141.

\bibitem[\protect\citeauthoryear{van Wieringen}{van
  Wieringen}{2015}]{RidgeTextbook}
\textsc{van Wieringen, W.~N.} (2015): \emph{Lecture Notes on Ridge Regression},
  ArXiv Preprint.

\bibitem[\protect\citeauthoryear{Witkowski and Parkes}{Witkowski and
  Parkes}{2012{\natexlab{a}}}]{WP2012b}
\textsc{Witkowski, J. and D.~C. Parkes} (2012{\natexlab{a}}): \enquote{Peer
  Prediction Without a Common prior,} \emph{Proceedings of the 13th ACM
  Conference on Electronic Commerce (EC12)}.

\bibitem[\protect\citeauthoryear{Witkowski and Parkes}{Witkowski and
  Parkes}{2012{\natexlab{b}}}]{WP2012a}
---\hspace{-.1pt}---\hspace{-.1pt}--- (2012{\natexlab{b}}): \enquote{A Robust
  Bayesian Truth Serum for Small Populations,} \emph{Proceedings of the 26th
  AAAI Conference on Artificial Intelligence (AAAI 13)}.

\end{thebibliography}

\newpage

\appendix 

\section{Proofs of the Main Result}

\begin{proof}[Proof of Theorem \ref{thm:identification}] 

The first part of the theorem follows the arguments as laid in the main text. Writing the information structure so that rows are states and columns are signals, the definition of hypothetical beliefs matrix tells us that: 

\begin{equation*} 
Q=B \cdot \mathcal{I}, 
\end{equation*}

\noindent since the $i$th row and $j$th column of $Q$ is the inner product of the $i$th row of $B$ (since rows of $B$ index signals) and the $j$th column of $\mathcal{I}$ (using the convention that columns index signals). Given this expression, and using that $(B^{T}B)^{-1}$ is invertible, the solution for $\mathcal{I}$ comes from left-multiplying both sides by $B^{T}$ and then left multiplying by $(B^{T}B)^{-1}$. 

Next, I show that the prior is identified. As shown in the main text, the prior is a unit eigenvector (with eigenvalue 1) of the matrix $B^{T} \mathcal{I}^{T}$, and therefore a unit eigenvector of $B^{T} Q^{T}B (B^{T}B)^{-1}$. I show that this matrix is \emph{always} row-stochastic. Recall that $\mathbf{1}^{T}B^{T}$ and $\mathbf{1}^{T}Q^{T}$ are both $\mathbf{1}$, since both $B$ and $Q$ are row-stochastic (so that the transposes are column-stochastic). Therefore:

\begin{equation*} 
\mathbf{1}^{T} B^{T}Q^{T}B(B^{T}B)^{-1}= \mathbf{1}^{T} Q^{T} B (B^{T}B)^{-1}= \mathbf{1}^{T} B(B^{T}B)^{-1}
\end{equation*}

\noindent Taking the transpose of this expression gives: 

\begin{equation*} 
(B^{T}B)^{-1}B^{T} \mathbf{1}.
\end{equation*}

\noindent Now, recall the regression interpretation of the linear mapping $(B^{T}B)^{-1}B^{T}$; when applied to a vector, it gives the coefficient of the regression of the vector onto the columns of $B$. However, the columns of $B$ sum to 1. Therefore, to write the vector $\mathbf{1}$ as a linear combination of the columns of $B$, I need only write it using coefficients equal to 1, and hence this expression is itself a vector of 1s, demonstrate that the matrix is row-stochastic. 

Hence, the Frobenius-Perron theorem holds provided the matrix $B^{T} \mathcal{I}^{T}$ is irreducible. This property holds generically; indeed, the entries of $B^{T} \mathcal{I}^{T}$ are generically positive, and all such matrices are irreducible. This theorem therefore yields the existence of a Perron eigenvector, which is positive and sums to 1. While the argument in the main text shows that being a Perron eigenvector is a \emph{necessary} condition for the prior, the Frobenius-Perron theorem implies that this vector is unique, and therefore this condition is also sufficient. As a result, the prior is additionally identified from $B$ and $Q$, in addition to the information structure, as desired. 
\end{proof}

\begin{proof}[Proof of Proposition \ref{prop:dimproof}] 

By Proposition 1 of \cite{KG2011}, given any belief matrix $B$ and prior $p$, there exists an information structure $\mathcal{I}$ inducing this belief matrix.\footnote{Their proof is constructive; in my notation, one can set $\mathcal{I}(\theta)[s]=B_{s,\theta}\Prob[s]/p[\theta]$. Now, note that $\Prob[s]$ is a left unit eigenvector of the matrix $Q$. Note, however, that $\Prob[s]$ would be derived from $Q$ in this paper, and not the prior $p$ and the posterior beliefs, as in theirs.} On the other hand, Theorem \ref{thm:identification} shows that \emph{any} vector $v$ of length $\abs{S}$ yields a vector of length $\Theta$ when considering $(B^{T}B)^{-1}B^{T} v$. Thus, the set of information structures is spanned by the set of $Q$ that emerge in some informational environment. Putting the previous observations together, the set of $Q$ which induce an informational environment given a belief matrix $B$ is isomorphic to the set of priors inducing $B$, which has dimensionality equal to $\abs{\Theta}-1$, for any belief matrix satisfying the linear independence condition. 
\end{proof}

\begin{proof}[Proof of Proposition \ref{prop:dimQ}]
It is immediate that the set of row stochastic matrices has dimension $\abs{S}(\abs{S}-1)$. I show that this is also the dimensionality of the set of plausible matrices which induce valid information structures. 

I first argue that the rows of $(B^{T}B)^{-1}B^{T}Q$ sum to 1 if $Q$ has rows that sum to 1. I note that the sum of the rows of this matrix is given by right multiplying by $\mathbf{1}_{\abs{S}}$, a vector of length $\abs{S}$ which is all 1s. If $Q$ has rows which sum to 1, then $(B^{T}B)^{-1}B^{T}Q\cdot \mathbf{1}_{\abs{S}}=(B^{T}B)^{-1}B^{T}\mathbf{1}_{\abs{S}}$. On the other hand, recall that this expression is also the coefficients $\beta_{1}, \ldots, \beta_{n}$ solving: 

\begin{equation*} 
\mathbf{1}_{\abs{S}}= \sum_{i=1}^{n} \beta_{i} b_{i}, 
\end{equation*} 

\noindent where $b_{i}$ is the $i$th column of the belief matrix $B_{i}$. While there is a unique set of coefficients solving this equation, I also have that the columns of a belief matrix sum to 1. Hence $\beta_{1}= \cdots = \beta_{n}=1$ is the solution. I thus conclude that $(B^{T}B)^{-1}B^{T}Q\cdot \mathbf{1}_{\abs{S}}$ is a vector of 1s, as claimed. 

Now, consider an arbitrary information structure $\tilde{\mathcal{I}}$ with \emph{only strictly positive entires}, and let $Q:=B^{T} \tilde{\mathcal{I}}$. Then considering the regression coefficients as in Theorem \ref{thm:identification}, I have $\tilde{\mathcal{I}}= (B^{T}B)^{-1}B^{T}Q$, which is a valid information structure. 

Furthermore, given that $B$ is full rank, $(B^{T}B)^{-1}B^{T}$ is a continuous transformation on the set of row stochastic matrices. Since $(B^{T}B)^{-1}B^{T}Q$ is strictly positive, by the definition of continuity, I have there is an open subset (assuming the relative topology) of row-stochastic matrices $\tilde{Q}$ such that $(B^{T}B)^{-1}B^{T}Q$ is strictly positive. This implies that the set of plausible $Q$ has the same dimensionality as the set of row-stochastic $Q$, since an open set within a topological space has the same dimensionality as the space. This completes the proof.   \end{proof}

\begin{proof}[Proof of Proposition \ref{prop:noncommon}]
Note that, given any $B$ satisfying the assumptions of my framework, that $B^{T}B$ is invertable, so that $(B^{T}B)^{-1}B^{T}Q$ is well-defined. I consider an information environment where all belief types use this information structure, but each belief type arrives at posterior beliefs by updating an belief-type specific prior $p^{s}$. I write $(b_{s,\theta})_{\theta}$ as the belief vector of type for belief type $s$, and consider some fixed $s$. Choose $r(\theta)$ to satisfy:

\begin{equation*} 
r(\theta)\mathcal{I}(\theta)[s] = b_{s,\theta},
\end{equation*}

\noindent noting that, since all other terms in this expression are non-negative, $r(\theta)$ is as well. Further note that one cannot have $\mathcal{I}(\theta)[s]=b_{s,\theta}=0$ for all $\theta$, since $\sum_{\theta} b_{s,\theta}=1$ by assumption. Hence $r(\theta) \geq 0$ for all $\theta$ and $r(\theta) \neq 0$ for some $\theta$. Therefore, I can set: 

\begin{equation*} 
p_{s}(\theta) =  \frac{r(\theta)}{\sum_{\tilde{\theta}} r(\tilde{\theta})}, 
\end{equation*} 

\noindent which is an element of $\Delta(\Theta)$. 

Suppose belief type $s$ updates using $\mathcal{I}$ and prior $p^{s}$. In that case, following signal $s$, the Bayesian belief is: 

\begin{equation*} 
\frac{ \frac{r(\theta)}{\sum_{\tilde{\theta}} r(\tilde{\theta})} \mathcal{I}(\theta)[s]}{\sum_{\theta'} \frac{r(\theta')}{\sum_{\tilde{\theta}} r(\tilde{\theta})} \mathcal{I}(\theta')[s]} = \frac{r(\theta) \mathcal{I}(\theta)[s]}{\sum_{\theta'} r(\theta') \mathcal{I}(\theta')[s]}  = \frac{b_{s,\theta}}{\sum_{\theta'}b_{s,\theta'}}=b_{s,\theta},
\end{equation*}

\noindent where all equations follow from either cancellations or definitions. Furthermore, the probability belief type $s$ assigns to a randomly selected individual having belief type $\tilde{s}$ is $q_{s,\tilde{s}}= \sum_{\theta} \mathcal{I}(\theta)[\tilde{s}]b_{s,\theta}$, since posterior beliefs are $b_{s,\theta}$ and the assumed information structure is $\mathcal{I}(\theta)$. This completes the proof.  \end{proof}

\begin{proof}[Proof of Theorem  \ref{thm:morestates}]
First, I claim that the limit $\lim_{\lambda \rightarrow 0} (B^{T}B + \lambda I)^{-1} B^{T}Q$ exists and is a finite matrix; I provide an independent proof of this result after completing the rest of the proof, though I mention that this fact appears known (though requiring some additional detours the proof below avoids).\footnote{As per \cite{RidgeTextbook}, the limit of the ridge estimator as $\lambda \rightarrow 0$ is precisely the least square estimate of smallest norm; showing this, however, requires a significant detour in defining ridge estimators. Note that \cite{RidgeTextbook} also shows that multiplying by a matrix $M$ as described in the main text amounts to a rescaling of the design matrix (in this case, $B$).}

I now turn to the second bulletpoint of the Theorem. Note that the ridge estimator defined by $\tilde{\mathcal{I}}$ solves the following minimization problem: 

\begin{equation} 
\tilde{\mathcal{I}}_{\lambda}(\cdot)[s]= \arg \hspace{-.70mm} \min_{\hspace{-5.5mm} x} ~~ \abs{\abs{q_{s, \cdot} - Bx}}^{2} + \lambda \abs{\abs{x}}^{2}. \label{eq:objective}
\end{equation}

By contrast, the information structure $\mathcal{I}$ solves $q_{s, \cdot}=B\mathcal{I}(\cdot)[s]$. Now, given the claim that $\lim_{\lambda \rightarrow 0}\tilde{\mathcal{I}}_{\lambda}(\cdot)[s]$ exists, it follows from this expression that the resulting limit must also be a solution to the equation $q_{s, \cdot}=Bx$; if it weren't, then one would have the objective in (\ref{eq:objective}) would converge to some strictly positive amount as $\lambda \rightarrow 0$; by contrast, \emph{any} solution to this equation makes this objective equal to 0 in the limit.  Hence any vector $x$ which does not satisfy $q_{s, \cdot}=Bx$ cannot be the limit of $\tilde{\mathcal{I}}_{\lambda}(\cdot)[s]$ as $\lambda \rightarrow 0$. 

On the other hand, for \emph{any} vector $x$ satisfying $q_{s, \cdot} = Bx$, subtracting the equation for $\tilde{\mathcal{I}}$ from this equation yields $0=B(x-\tilde{\mathcal{I}}(\cdot)[s])$, so that $x-\tilde{\mathcal{I}}(\cdot)[s]$ is in the nullspace of $B$. But the information structure generating the decision-maker's information is one possible choice of $x$; therefore, $\mathcal{I}(\cdot)[s]-\tilde{\mathcal{I}}(\cdot)[s]=\sum_{i} \alpha_{i} v_{i}$, where $\{v_{1}, \ldots, v_{k}\}$ is a basis for the nullspace of $B$ (assuming the dimension of this space is $k$); adding $\tilde{\mathcal{I}}(\cdot)[s]$ to both sides of this expression proves the second bulletpoint.\footnote{An identical argument shows the claim that \emph{every} solution to the equation $Q=BX$ differs from $\mathcal{I}$ in this way.}

Next, I show that $p$ is a unit eigenvector with eigenvalue 1. Suppose $Q=B \tilde{\mathcal{I}}$, and let $q$ denote left unit eigenvector with eigenvalue 1. Recall that (see Section \ref{sect:SP}) that $q=(\mathbb{P}[s])_{s \in S}$, the vector of probabilities that each signal in $s$ is observed given the informational environment. However, note that the martingale properties of beliefs states that $\mathbb{P}[s] \cdot B=p(\theta)$; putting this together gives us that, for every $s$, I have $\sum_{\theta} \tilde{\mathcal{I}}(\theta)[s] \cdot p(\theta) = q_{s}$ (i.e., the entry of $q$ corresponding to $s$. On the other hand, the martingale property of beliefs, written $qB=p$ (where $p$ is the prior), does not depend on the information structure. Thus, I can apply the same argument to say that the prior is a left unit eigenvector (with eigenvalue 1) of $\tilde{\mathcal{I}}B$. Note that this does not complete the proof since I still have to show this eigenvector is unique. 

Letting $p_{X}(\lambda)$ be the characteristic polynomial for a matrix $X$, I note that for matrices $A$ and $B$ where $A$ is $m$-by-$n$ and $B$ is $n$-by-$m$, with $n \geq m$ one has $p_{BA}(\lambda)=\lambda^{n-m}p_{AB}(\lambda)$. I apply this result to $B^{T} \mathcal{I}^{T}$ and $B^{T} \tilde{\mathcal{I}}^{T}$ (see Theorem 1.3.22 in \cite{HornJohnson}). In particular, using the previous result, write $\tilde{\mathcal{I}}$ as $\mathcal{I}+W$, where each row of $W$ is in the null space of $B$. In particular, since each $W$ is in the null space of $B$, I have: 

\begin{equation*} 
(\mathcal{I}+W)^{T}B^{T} = \mathcal{I}^{T} B^{T}. 
\end{equation*}

\noindent Putting this together with the previous results, using that $\abs{\Theta} > \abs{S}$ (so that $B^{T}$ has more rows than columns), I have: 

\begin{equation*} 
p_{B^{T} \tilde{\mathcal{I}}^{T}}(\lambda)= \lambda^{\abs{\Theta}-\abs{S}}
p_{\tilde{\mathcal{I}}^{T}B^{T} }(\lambda)=\lambda^{\abs{\Theta}-\abs{S}}
p_{\tilde{\mathcal{I}}^{T}B^{T} }(\lambda)=p_{B^{T} \mathcal{I}^{T}}(\lambda)
\end{equation*}

\noindent As argued in Theorem \ref{thm:identification}, $B^{T}\mathcal{I}^{T}$ has a unique eigenvector with eignvalue 1 (i.e., there is no multiplicity in the eigenspace). Thus, since $B^{T} \mathcal{I}^{T}$ has a unique eigenvector with eigenvalue 1, so does $B^{T}\tilde{\mathcal{I}}$; as argued above, this must be the prior, completing the proof. 

I conclude by showing that $\lim_{\lambda \rightarrow 0} (B^{T}B + \lambda I)^{-1} B^{T}Q$ exists and is a finite matrix, as claimed. I first determine the rate at which the determinant of $B^{T}B+\lambda I$ tends to 0 as $\lambda \rightarrow 0$. Note that, by the Matrix Determinant Lemma (see (6.2.3) of \cite{Meyer2000} for a version of this result), I have: 

\begin{equation*}  
\det(\frac{1}{\lambda} B^{T}B + I) = \det(I_{\abs{S}}+ \frac{1}{\lambda}BB^{T}).
\end{equation*} 

\noindent Note that the matrix involved in the left-hand side of this equation is $\abs{\Theta}-by-\abs{\Theta}$ and the matrix involved in the right hand side of this equation is $\abs{S}-by-\abs{S}$. I therefore have, multiplying through by $\lambda^{\abs{\Theta}}$ and using that $\det(cA)=c^{n}\det(A)$ for $c \in \mathbb{R}$ and $A$ an $n$-by-$n$ matrix,

\begin{equation*} 
\det( B^{T}B + \lambda I) = \det( \lambda^{\abs{\Theta}/\abs{S}} I_{\abs{S}}+ \lambda^{(\abs{\Theta}- \abs{S})/\abs{S}}BB^{T}).
\end{equation*}

\noindent Note that this determinant is a polynomial in $\lambda$ which evaluates to 0 at $\lambda=0$, and hence this approaches 0 at a rate equal to the rate of the smallest term in this polynomial. I claim the degree is strictly less than $\abs{\Theta}$. This is clear from examining the right hand side of the equation above. While every term on the diagonal in this matrix is of the order $\lambda^{\abs{\Theta}/\abs{S}}$, every term \emph{off} the diagonal is of the order $\lambda^{\abs{\Theta}/\abs{S}-1}$. To show that there is a term in the polynomial defined by the determinant that is of order less than $\abs{\Theta}$, it suffices to show that some term in this expression reflects off diagonal terms. Note that the determinant is a sum over permutations $\sigma : \abs{S} \rightarrow \abs{S}$: 

\begin{equation*} 
\sum_{\sigma}\text{sgn}(\sigma) \prod_{i=1}^{n} b_{i, \cdot} b_{\cdot, \sigma(i)},
\end{equation*}

\noindent where $b_{i, \cdot}$ is the $i$th row of $B$. Then the permutations which simply swap two elements (of which there are $\abs{S}\cdot (\abs{S}-1)/2$ of) contribute to the determinant; for any such permutation, $\text{sgn}(\sigma)=-1$. Since any coefficient in this sum where $\lambda$ is of degree $\abs{\Theta}-2$ must correspond to one of these permutations, the exponent on $\lambda$ reflecting these permutations is at most $\abs{\Theta}-2$; and thus, the smallest non-zero degree of the characterisic polynomial must be less than $\abs{\Theta}$.\footnote{More generally, the lowest degree of the polynomial should be $\abs{\Theta}- \abs{S}$; showing this, however, requires that some permutations which influence the determinant do not fix any elements on the diagonal. Determining that not all terms cancel out, while certainly intuitive, appears less direct than this argument. However, provided this is the case, then any entry corresponding to exclusively off-diagonal term will be a polynomial of order $(\lambda^{(\abs{\Theta}-\abs{S})/\abs{S}})^{\abs{S}}$, since the matrix is $\abs{S}$-by-$\abs{S}$.}

Therefore, as $\lambda \rightarrow 0$, $\det \left((B^{T}B + \lambda I) \frac{1}{\lambda}I \right)= \det(B^{T}B + \lambda I) \frac{1}{\lambda^{\abs{\Theta}}} \not\rightarrow 0$. Taking inverses, $(B^{T}B + \lambda I)^{-1} \lambda I$ must have a limit; indeed, in the definition of the matrix inverse, each term is scaled by the inverse of the determinant, and otherwise comes from multiplying and adding matrix elements together---so, since each term is scaled by a term that does not approach infinity, each term converges to a finite limit. Using Equation \ref{eq:ridgeequation}, I conclude that the limit defining $\tilde{\mathcal{I}}$ exists.\end{proof}

\section{Additional Results and Discussion}
\subsection{Removing Linear Dependencies} \label{app:lineardependence}
This section comments on the full rank assumptions. First, note that this assumption is generic: Indeed, the set of all possible belief matrices is of dimension $\abs{S} \times (\abs{\Theta}-1)$ (with one degree of freedom lost for every row, since every row is restricted to sum to 1); in general, the space spanned by $k$ belief vectors, restricted to sum to 1, is $(k-1) \times \abs{S}$ dimensional. Hence a generic belief matrices is full rank, in that the set of belief matrices which fail to satisfy this belong to a lower dimension subspace.  Given this observation, $B$ is generically of rank equal to $\abs{\Theta}$. On the other hand, the rank of $B^{T}B$ is equal to the rank of $B$, and therefore $B^{T}B$ has full rank. Since a square matrix is invertible if and only if it has full rank, I have that $B^{T}B$ is invertible.

I now comment on the implicit assumption behind the full rank assumption that the columns of $B$ are linearly independent. While the argument in the previous paragraph shows this assumption is generic, one may be interested in cases in where it is violated or understanding the substance of this assumption. Note that if this condition fails, then the matrix $B^{T}B$ is not invertible.\footnote{Indeed, as discussed in Section \ref{sect:morestates}, this condition always \emph{fails} when $\abs{S} > \abs{\Theta}$.} 

I show how the case of linearly dependent columns can be interpreted as reflecting the case where a state is ``split apart.'' Rather than discussing this in full generality, I present an illustrative example. 

Suppose that $\Theta= \{\theta_{1}, \theta_{2}, \theta_{3}, \theta_{4}\}$, where the third column of $B$ is a linear combination of the first two: 

\begin{equation*} 
B=\begin{pmatrix} 
2/3 & 0 & 1/3 & 0 \\ 1/3 & 1/3 & 1/3 & 0 \\ 0 & 2/5 & 1/5 & 2/5 \\ 0 & 0 & 0 & 1 
\end{pmatrix}, ~~~ Q=\begin{pmatrix} 
1/2 & 1/2 & 0 & 0 \\ 1/4 & 1/2 & 1/4 & 0 \\ 0 & 3/10 & 1/2 & 1/5 \\ 0 & 0 & 1/2 & 1/2
\end{pmatrix}
\end{equation*}
\noindent Even though $B$ is 4-by-4, $B^{T}B$ is not invertible, as the linear independence condition is not satisfied; specifically, the third column is $1/2$ times the first column and $1/2$ times the second column. Ideally, one could ``remove'' the third state responsible for the linear dependencies. Importantly, since the hypothetical belief matrix makes no reference to the underlying states, it would not change if states were removed, provided the distributions over the signals were to not change. 

I now show how to remove the state $\theta_{3}$, and subsequently interpret the original state space as an auxiliary one where $\theta_{3}$ is induced with equal probabilities following $\theta_{1}$ and $\theta_{2}$ (\emph{after} these are already drawn). After doing this, it will be possible to recover the information structure and prior. Renormalize $B$ so that it does not include $\theta_{3}$; that is, consider the belief matrix that would emerge conditional on $\{\theta_{1}, \theta_{2}, \theta_{4}\}$. Considering the belief matrix that emerges when I remove $\theta_{3}$ in this way, I have: 

\begin{equation*}
\tilde{B}=\begin{pmatrix} 
1 & 0  & 0 \\ 1/2 & 1/2  & 0 \\ 0 & 3/5 & 2/5 \\ 0 & 0 & 1
\end{pmatrix}.
\end{equation*}

\noindent If one were to have started with $\tilde{B}$, then $B$ could be obtained by considering the case where the state is ``flipped'' to $\theta_{3}$ following $\theta_{1}$, with probability $1/3$, and the state is ``flipped'' to $\theta_{3}$ following $\theta_{2}$, again with probability $1/3$ (and never following $\theta_{4}$). Indeed, the third column of $B$ is the sum of the first two columns, times $1/2$; and the first two columns of $B$ are the same as the first two columns of $\tilde{B}$, divided by $2/3$ (and $2/3$ is the probability that the state is ``unflipped''). This is the sense in which $\theta_{3}$ is a linear combination of $\theta_{1}$ and $\theta_{2}$. In particular: whenever one column is a convex combination of other columns, one can simply eliminate it from the belief matrix, and then ``regenerate'' it in this way. 

Now, the matrix $\tilde{B}^{T} \tilde{B}$ \emph{is} invertible, and regressing each column of $Q$ on $\tilde{B}$ gives an information structure. In this case: 

\begin{equation*} 
(\tilde{B}^{T} \tilde{B})^{-1} \tilde{B} Q= \begin{pmatrix} 1/2 & 1/2 & 0 & 0 \\ 0 & 1/2 & 1/2 & 0 \\ 0 & 0 & 1/2 & 1/2  \end{pmatrix}.
\end{equation*}

\noindent One can check that this information structure generates $\tilde{B}$ and $Q$, as Theorem \ref{thm:identification} suggests it should, using the prior $\Prob[\theta_{1}]=\Prob[\theta_{2}]=3/8$ and $\Prob[\theta_{4}]=1/4$.

Now, notice that in the above interpretation, $\theta_{3}$ is induced with equal probabilities following $\theta_{1}$ and $\theta_{2}$. So consider the following information structure on the original state space $\{\theta_{1}, \theta_{2}, \theta_{3}, \theta_{4} \}$:

\begin{equation*} 
\mathcal{I}=\begin{pmatrix} 
1/2 & 1/2 & 0 & 0 \\ 0 & 1/2 & 1/2& 0 \\ 1/4 & 1/2 & 1/4 & 0 \\ 0 & 0 & 1/2 & 1/2
\end{pmatrix}.
\end{equation*}

\noindent  Where does the signal distribution following state $\theta_{3}$ come from? The third row of this vector is one half the first row plus one half the second row. In other words, the signal distribution is exactly what it would be if ``the state is $\theta_{3}$'' is equivalent to ``the state is $\theta_{1}$ with probability 1/2 and $\theta_{2}$ with probability 1/2.'' And indeed, one can check that this information structure, under a uniform prior (which, again, is what would the prior would be under the specification of how $\theta_{3}$ is determined from $\theta_{1}$ and $\theta_{2}$), generates $B$. 

\subsection{Deterministically Generated Signals} \label{app:irreducibility}

In this section I discuss a special case of the framework, where signals are generated deterministically as a function of the state. There are three reasons why this is of interest. One is simply practical---there are cases where a decisionmaker may observe a partition of the state space, and this alternative describes this model. Second, it is likely the simplest case where $B^{T} \mathcal{I}^{T}$ fails to be irreducible, allowing me to illustrate that the prior is not identified (and in particular why one should not expect the prior to be identified, and what is identified instead). Third, this case imposes that $\abs{\Theta} > \abs{S}$, since a partition of a state space by definition cannot have more elements than a state space.  We will see that the ridge regression procedure can, in this case, produce an information structure, albeit one using a different limiting equation. 

Suppose $\Theta=\{\theta_{1}, \theta_{2}, \theta_{3}, \theta_{4}\}$, and consider an information structure where the decision-maker observes which element of $\mathcal{P}= \{\{\theta_{1}\}, \{\theta_{2}, \theta_{3} \}, \{\theta_{4} \}\}$ the state belongs to. If $p(\theta_{i})$ is the prior probability over state $\theta_{i}$, then this corresponds to the following state belief matrix and hypothetical belief matrix: 

\begin{equation*} 
B= \begin{pmatrix} 1 & 0 & 0 & 0 \\ 0 & \frac{p(\theta_{2})}{p(\theta_{2})+p(\theta_{3})} & \frac{p(\theta_{3})}{p(\theta_{2}) + p(\theta_{3})} & 0 \\ 0 & 0 & 0 & 1 \end{pmatrix}, ~~~ Q = \begin{pmatrix} 1 & 0 & 0 \\ 0 & 1 & 0 \\ 0 & 0 & 1 \end{pmatrix}
\end{equation*}

\noindent  Of course, in this example my main exercise is fairly straightforward, but this directness will be helpful in understanding the implications of irreducibility of $B^{T}\mathcal{I}^{T}$ as well as why the ridge estimator does not yield the correct information structure.  I compute: 

\begin{equation*} 
\lim_{\lambda \rightarrow 0} ~~ (B^{T}B + \lambda I)^{-1} \lambda I = \begin{pmatrix} 0 & 0 & 0 \\ 0 & \frac{p(\theta_{3})(p(\theta_{3})-p(\theta_{2}))}{p(\theta_{2})^{2} + p(\theta_{3})^{2}} & 0 \\ 0 & \frac{p(\theta_{2})(p(\theta_{2})-p(\theta_{3}))}{p(\theta_{2})^{2}+p(\theta_{3})^{2}} & 0 \\ 0 & 0 & 0 \end{pmatrix}
\end{equation*}

This matrix converges to 0 only when $p(\theta_{2}) = p(\theta_{3})$. And indeed, the procedure fails to produce an information structure:  

\begin{equation*} 
\lim_{\lambda \rightarrow 0} ~~ (B^{T}B + \lambda I)^{-1} B^{T}Q = \begin{pmatrix} 1 & 0 & 0 \\ 0 & \frac{p(\theta_{2})(p(\theta_{2})+p(\theta_{3}))}{p(\theta_{2})^{2}+ p(\theta_{3})^{2}} & 0 \\ 0 & \frac{p(\theta_{3})(p(\theta_{2})+p(\theta_{3}))}{p(\theta_{2})^{2}+ p(\theta_{3})^{2}} & 0 \\ 0 & 0 & 1 \end{pmatrix}.
\end{equation*}

\noindent What went wrong? Notice that in the above derivation, adding $\lambda I$ to $B^{T}B$ was only one way to ensure the inversion step would be possible. In this case, the procedure arrives at a solution for $Q=B \tilde{\mathcal{I}}$ which does not correspond to the true information structure. As discussed, this equation \emph{should} have multiple solutions in the case of $\abs{\Theta} > \abs{S}$, while the limit only considers one of them. 

Note that in this case, a different regularization would deliver the information structure. For instance, one can compute that: 

\begin{equation*} 
\lim_{\lambda \rightarrow 0} \left(B^{T}B + \lambda \overbrace{\begin{pmatrix} 1 & 0 & 0 & 0 \\ 0 & 1 & 0 & 0 \\ 0 & 0 & \frac{p(\theta_{3})}{p(\theta_{2})} & 0 \\ 0 & 0 & 0 & 1 \end{pmatrix}}^{:= M} \right)^{-1}B^{T}Q = \begin{pmatrix} 1 & 0 & 0 \\ 0 & 1 & 0 \\ 0 & 1 & 0 \\ 0 & 0 & 1 \end{pmatrix},
\end{equation*}

\noindent which is indeed the deterministic information structure in this example. One can derive this expression by following the same steps as outlined in Equation (\ref{eq:ridgeequation}), but considering a different perturbation; namely, using $\lambda M$ instead of $\lambda I$.

And indeed, there is no unique prior identified by $B$ and $Q$ in this case---while one can determine the prior beliefs ``within a signal,'' in the deterministic information structure case, it is not possible to determine the relative probability across different partition elements. More generally, as illustrated by this example, if $B^{T}\mathcal{I}^{T}$ is not irreducible, then while one can use the Frobenius-Perron theorem to determine a unique prior within each irreducible class, one cannot determine the \emph{relative} prior probabilities across classes. These relative probabilities cannot possibly be determined using \emph{any} ex-post data, reflecting the substantive implications of the failure of irreducibility.

To conclude this section, I note that an information structure is deterministic if and only if $Q$ is the identity. Note that in this case, $ \abs{\Theta} > \abs{S}$ whenever the information structure does not reveal the state. 

A deterministic information structure involves the decision-maker observing an element of the partition of $\Theta$. Suppose an information structure is partitional. This implies that the probability of observing any signal given any state is either 0 or 1. On the other hand, $b_{s, \theta}$ is positive if and only if $\mathcal{I}(\theta)[s] =1$, meaning that $q_{s, \tilde{s}}$ is equal to 1 if $s = \tilde{s}$ and 0 otherwise. Therefore, $Q$ is the identity. 

Now suppose $Q$ is the identity. Notice that each entry of $Q$ is a convex combination of $\mathcal{I}(\theta)[\cdot]$, weighted according to a row of $B$. So, if $b_{s, \theta} > 0$, then I must have $\mathcal{I}(\theta)[s]=1$. Notice that this immediately implies that this partitions the state space, since one cannot have two signals $s, s'$ for which $b_{s, \theta} > 0$, since this would imply the rows of $\mathcal{I}$ sum to a number greater than 1. Therefore, I obtain a partition of a subset of the state space $\Theta$; for any $\theta \in \Theta$ that is not in this subset, I have $b_{s, \theta}=0$ for all $s \in S$. In this case, $B$ and $Q$ are generated according to a partitional information structure, where each element of the partition is the support of $b_{s, \theta}$ for some $s$, and where the prior assigns probability 0 to any state where $b_{s, \theta}=0$ for all $s$. Hence, the information is generated by a partition. 

\end{document}